\documentclass[a4paper,english,numberwithinsect]{basic}

\bibliographystyle{plainurl}

\usepackage{microtype} 


\usepackage{hyperref}

\usepackage{xspace}
\usepackage{todonotes}
\usepackage{caption}
\usepackage{subcaption}
\usepackage{thm-restate}

\newcommand{\SCR}{\textsf{SCR}\xspace}
\newcommand{\aul}{$\alpha_{t\ell}$\xspace}
\newcommand{\aur}{$\alpha_{tr}$\xspace}
\newcommand{\adr}{$\alpha_{br}$\xspace}
\newcommand{\adl}{$\alpha_{b\ell}$\xspace}
\newcommand{\gur}{$g_{tr}$\xspace}
\newcommand{\gdr}{$g_{br}$\xspace}
\newcommand{\gul}{$g_{t\ell}$\xspace}
\newcommand{\gdl}{$g_{b\ell}$\xspace}
\newcommand{\su}{$s_t$\xspace}
\newcommand{\sr}{$s_r$\xspace}
\newcommand{\sd}{$s_b$\xspace}
\newcommand{\slf}{$s_\ell$\xspace}

\newcommand{\len}{\text{len}}

\def\imagebox#1#2{\vtop to #1{\null\hbox{#2}\vfill}}

  {\begin{list}{}%
          {\setlength{\leftmargin}{#1}}%
          \item[]%
  }
  {\end{list}}

\title{Axis-Aligned Square Contact Representations}
\titlerunning{Axis-Aligned Square Contact Representations}

\author[1]{Andrew Nathenson\thanks{Research on this paper was partially supported by the NSF award DMS-1800734.}}
\affil[1]{California State University Northridge\\
  \texttt{andrew.nathenson.540@my.csun.edu}}
\authorrunning{Andrew Nathenson}


\begin{document}

\maketitle

\begin{abstract}
We introduce a new class $\mathcal{G}$ of  bipartite plane graphs and prove that each graph in $\mathcal{G}$ admits a proper square contact representation.  
A contact between two squares is \emph{proper} if they intersect in a line segment of positive length. The class $\mathcal{G}$ is the family of quadrangulations obtained from the 4-cycle $C_4$ by successively inserting a single vertex or a 4-cycle of vertices into a face.  

For every graph $G\in \mathcal{G}$, we construct a proper square contact representation. The key parameter of the recursive construction is the aspect ratio of the rectangle bounded by the four outer squares. We show that this aspect ratio may continuously vary in an interval $I_G$. The interval $I_G$ cannot be replaced by a fixed aspect ratio, however, as we show, the feasible interval $I_G$ may be an arbitrarily small neighborhood of any positive real.

\end{abstract}

\section{Introduction}

Geometric representations of graphs have many applications and yield intriguing problems~\cite{Lovasz}. Koebe's celebrated \emph{circle packing theorem}~\cite{Koebe}, for example, states that every planar graph is a contact graph of interior-disjoint disks in the plane. Schramm~\cite{Schramm} proved that this theorem holds even if we replace the disks with homothets of an arbitrary smooth strictly convex body in the plane. The result extends to non-smooth convex bodies in a weaker form (where a homothet may degenerate to a point, and three or more homothets may have a common point of intersection), and every planar graph is only a \emph{subgraph} of such a contact graph.

In this paper, we consider \emph{strong} contact representations with interior-disjoint convex bodies where no three convex bodies have a point in common. 
It is an open problem to classify graphs that admit a strong contact representation with homothets of a triangle or a square~\cite{DBLP:conf/cccg/BadentBGDFGKPPT07,DBLP:conf/isaac/LozzoDEJ17}. It is known that every partial 3-tree~\cite{DBLP:conf/cccg/BadentBGDFGKPPT07} and every 4-connected planar graph admits a strong contact representation with homothetic triangles, see~\cite{DBLP:conf/compgeom/FelsnerF11,DBLP:journals/dcg/GoncalvesLP12}; but there are 3-connected planar graphs which do not admit such a representation. We note here that every planar graph admits a strong contact representation with (non-homothetic) triangles~\cite{DBLP:journals/cpc/FraysseixMR94}; see also~\cite{DBLP:journals/dcg/GoncalvesLP12}.

Strong contact representations with homothetic squares have been considered only recently. Da~Lozzo et al.~\cite{DBLP:conf/isaac/LozzoDEJ17} proved that every $K_{3,1,1,}$-free partial 2-tree admits a proper contact representation with homothetic squares, where a contact between two squares is \emph{proper} if they intersect in a line segment of positive length (in particular, proper contacts yield a strong contact representation). Eppstein~\cite{11011110} indicated that another family of graphs, defined recursively, can also be represented as a proper contact graph of squares. We remark that Klawitter et al.~\cite{DBLP:conf/gd/KlawitterNU15} proved that every triangle-free planar graph is the proper contact graph of (non-homothetic) axis-aligned rectangles.

\begin{figure}[h]	
	\centering
	\begin{subfigure}[b]{0.2\textwidth}
	    \centering
		\includegraphics[width=\textwidth, keepaspectratio]{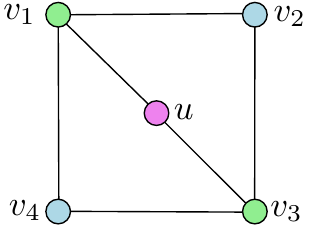}
		\caption{}\label{fig:graphs-a}
	\end{subfigure}
	\hfill
	\begin{subfigure}[b]{0.2\textwidth}
	    \centering
		\includegraphics[width=\textwidth, keepaspectratio]{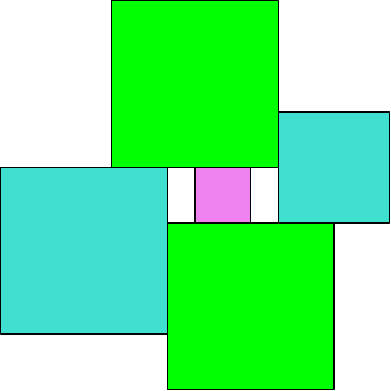}
	\end{subfigure}
	\hfill
	\begin{subfigure}[b]{0.2\textwidth}
		\centering
		\includegraphics[width=\textwidth, keepaspectratio]{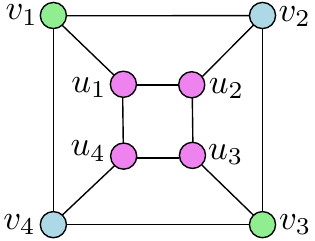}
		\caption{}\label{fig:graphs-b}
	\end{subfigure}
	\hfill
	\begin{subfigure}[b]{0.2\textwidth}
		\centering
		\includegraphics[width=\textwidth, keepaspectratio]{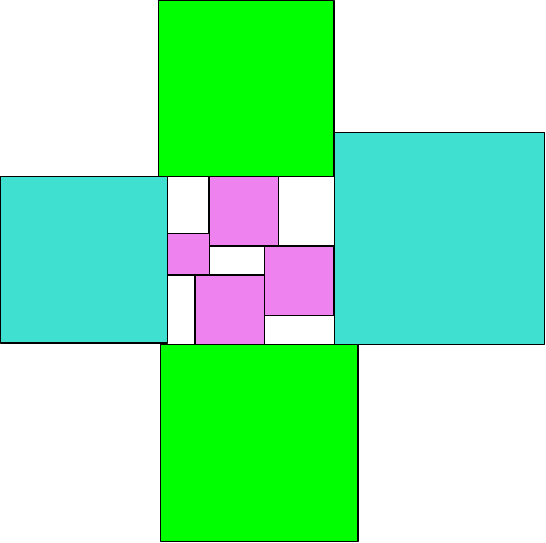}
	\end{subfigure}
	\caption{The two operations used to obtain a graph in $\mathcal{G}$ and their square contact representations.}\label{fig:graphs}
\end{figure}


\subparagraph{Contribution.}
Let $\mathcal{G}$ be a family of plane bipartite graphs defined recursively as follows. 
(i) $\mathcal{G}$ contains the 4-cycle $C_4$. 
(ii) If $G\in\mathcal{G}$ and $f=(v_1,v_2,v_3,v_4)$ is a bounded 4-face of $G$, then $\mathcal{G}$ also contains the graphs $G_a$ and $G_b$ obtained by the following two operations: (a) insert a vertex $u$ into $f$ and connect it to $v_1$ and $v_3$; (b) insert four vertices $u_1,\ldots , u_4$ into $f$, add the cycle $(u_1,u_2,u_3,u_4)$ and the edges $u_iv_i$ for $i=1,\ldots, 4$; see Fig.~\ref{fig:graphs}.

Every maximal 2-degenerate bipartite plane graph can be constructed by operation (a); and the 1-skeleton of every polycube whose dual graph is a tree~\cite{11011110} can be constructed by operation (b). However, the two operations jointly produce a larger class $\mathcal{G}$, which belongs to the class of 3-degenerate bipartite plane graphs. In a square contact representation (\SCR) of a graph in $\mathcal{G}$, every vertex $v_i$ corresponds to an axis-aligned square $s(v_i)$, and every bounded face to an axis-aligned rectangle $g(f_i)$, which is also called the \emph{gap} corresponding to $f_i$. We present our main result:

\begin{theorem}\label{thm:main}
Every graph in $\mathcal{G}$ admits a proper square contact representation. 
\end{theorem}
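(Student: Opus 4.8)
The natural route is induction on the number of insertion operations used to build $G$ from $C_4$, after strengthening the statement so that it feeds itself. Track the quantity highlighted in the abstract: in a proper \SCR of $G$ the four outer squares $s(v_1),\dots,s(v_4)$ sit ``pinwheel'' around one axis-aligned rectangle $R$ (each $s(v_i)$ meets a single side of $R$ in a segment, consecutive squares meet, and the enclosed rectangle is non-degenerate; the same local picture occurs around the gap of every bounded $4$-face, since a $4$-face gap is a rectangle and each incident square touches exactly one of its sides). Write $\rho(R)=\wdth(R)/\hght(R)$. The strengthened claim is: for every $G\in\mathcal{G}$ the set $I_G$ of values $\rho(R)$ realized by some proper \SCR of $G$ is a nonempty \emph{open interval}; moreover the realizing representation can be chosen to depend continuously on $\rho\in I_G$ and to keep open slack in the sizes and offsets of its most recently placed squares. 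This last ``interface'' clause is what lets the output of one operation serve as the input of the next, and it is the only reason the claim is more than a bare existence statement. Theorem~\ref{thm:main} is then immediate, as $I_G\neq\emptyset$.

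For the base case $G=C_4$ I would write a pinwheel of four squares around a rectangle as an explicit family in the square sizes and offsets and read off $\rho(R)$ directly; there is plenty of freedom and $I_{C_4}$ comes out a large open interval. For the inductive step, let $G'$ arise from $G$ by applying operation (a) or (b) to a bounded $4$-face $f$. It is convenient to look at the operation tree of $G'$ and process it from the leaves upward: a face never operated on puts no condition on its gap, while a face $f$ treated by (a)/(b) requires, through its children, that the child gaps have aspect ratios in their (inductively known, nonempty open) intervals; the root of the tree corresponds to the rectangle $R$ bounded by the four outer squares of $C_4$, so $I_{G'}$ is the interval attached to that root. Everything reduces to a local lemma at $f$. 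In operation (a), the new square $s(u)$ must touch the two outer squares opposite across the rectangle $R_f=g(f)$, which forces $s(u)$ to span $R_f$ from one side to the other and slices $R_f$ into two sub-rectangles; parameterizing by the position of $s(u)$ one writes the two child aspect ratios as explicit continuous functions of that position and of $\rho(R_f)$, and checks that, as these range over an open box, the pair of child aspect ratios sweeps an open set meeting the product of the two child intervals. In operation (b), one inserts a small pinwheel of four squares inside $R_f$, each touching the corresponding outer square, cutting $R_f$ into a central gap and four side gaps; here one parameterizes by the four new sizes (the offsets being pinned by the internal contacts), writes the five child aspect ratios as explicit functions of these and of $\rho(R_f)$, and shows this map from an open parameter box is open and carries the five aspect ratios onto an open set meeting the product of the five child intervals. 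In both cases the feasible set of parameters is open, it can be taken connected, and $\rho(R_f)$ varies continuously and non-trivially on it, so its image is a nonempty open interval; one also checks the insertion can hand each child an outer-square configuration of the kind the inductive hypothesis requires, using the slack provided by that hypothesis. Composing these local steps up the tree produces $I_{G'}$.

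The main obstacle is operation (b). There one must satisfy four contact conditions with inherited outer squares while simultaneously steering five child gaps, and with only the four new sizes on hand the degree-of-freedom count is exactly tight; the argument has to spend the slack handed down from higher in the tree — the open ranges of where the outer squares meet $R_f$, and the freedom in $\rho(R_f)$ itself — to make up the difference, and then verify that the feasible region still has nonempty interior. Establishing this nonemptiness in general, not just for one lucky aspect ratio, is exactly the phenomenon behind the abstract's remark that $I_G$ may be an arbitrarily small neighborhood of any positive real: the intervals can shrink as one descends the recursion, but the argument must guarantee they never degenerate to a point. A pervasive secondary issue is degeneracy control — every new contact must be a segment of positive length and no contact beyond those prescribed may appear — which is just the demand that every relevant inequality stay strict, i.e.\ that one works in the interior of the feasible region throughout; and one must also check that operation (a) is only ever invoked on a face whose gap lies on the correct side of aspect ratio $1$, matching the direction in which $s(u)$ is forced to span, which is consistent with (and partly explains) why some of the intervals $I_G$ are one-sided.
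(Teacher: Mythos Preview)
Your high-level plan—strengthen the statement so the induction feeds itself, then handle operations (a) and (b) by local subdivision lemmas—is the paper's strategy. But your treatment of operation (b) has a genuine gap. You parameterize by the four new square sizes together with $\rho(R_f)$ and assert that the resulting map to the five child aspect ratios is open and that its image ``meets the product of the five child intervals.'' Openness alone does not give this: by Theorem~\ref{thm:arb}, those child intervals can be arbitrarily small neighborhoods of arbitrary points of $(0,\infty)$, so for the image to meet \emph{every} such product it must be all of $(0,\infty)^5$. That surjectivity is the entire difficulty, and nothing in your sketch establishes it. Your proposed remedy—spending ``slack handed down from higher in the tree''—is unavailable in the bottom-up pass you describe, and in any case is not what is missing.

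The paper resolves this by proving surjectivity directly. Lemma~\ref{lem:2} shows, via a multi-case incremental construction passing through several special configurations (pinwheel, stacked, arrow, near-pinwheel; Lemmas~\ref{lem:slide-scale}--\ref{lem:near-pinwheel-resolution}), that for \emph{every} quintuple $(\alpha_1,\dots,\alpha_5)$ there exists a bounding rectangle admitting the required subdivision, possibly with point contacts; Lemma~\ref{lem:2+} then perturbs to proper contacts at the cost of an $\varepsilon$-error in the ratios. This is packaged into a different strengthening than yours: Theorem~\ref{thm:2} asserts that \emph{any} tuple of target gap ratios is achievable within $\varepsilon$, rather than tracking feasible intervals. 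The induction then simply computes the needed parent ratio from the children's targets (via Lemma~\ref{lem:1+} or~\ref{lem:2+}) and invokes the hypothesis on the smaller graph to realize that ratio approximately. Your side concerns—connectedness of the feasible parameter set, extra contact constraints with the inherited outer squares, and the spanning direction in operation (a)—are either unnecessary or automatic in the paper's setup: each side of a gap lies entirely within one outer square's side, so the new squares' contacts with the outer squares come for free, and the parent aspect ratio is chosen rather than given, so it can always be placed on the required side of~$1$.
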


We prove Theorem~\ref{thm:main} by induction in Section~\ref{sec:scr}. For the induction hypothesis we establish a stronger version of the theorem in which one specifies intervals for the aspect ratios (defined as height/width) of every gap in the representation, then recursively creates the \SCR around those gaps. 

\begin{restatable}{theorem}{thmtwo}\label{thm:2}
Let $G\in \mathcal{G}$ be a graph with $n$ vertices and $n-3$ bounded faces $f_1,\ldots, f_{n-3}$. For all $\alpha_1,\ldots, \alpha_{n-3}>0$ and for all $\varepsilon>0$, the graph $G$ admits a proper square contact representation such that the aspect ratio of the gap corresponding to $f_i$ is $\alpha_i'$, with $|\alpha_i-\alpha_i'|<\varepsilon$, for all $i=1,\ldots , n-3$.
\end{restatable}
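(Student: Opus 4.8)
The plan is to prove Theorem~\ref{thm:2} by structural induction on the recursive construction of $G \in \mathcal{G}$, mirroring the two operations (a) and (b) that build the class. The base case is $G = C_4$: four squares arranged around a single rectangular gap. Here we have one face $f_1$, and we must realize a gap whose aspect ratio is within $\varepsilon$ of the prescribed $\alpha_1$. This is easy — pick four equal or suitably sized squares placed at the four corners of a rectangle of the desired shape; in fact for $C_4$ one can hit $\alpha_1$ exactly, but stating it with the $\varepsilon$ slack keeps the induction uniform. The key structural observation to set up is that in any $G \in \mathcal{G}$, the last operation applied (either (a) or (b)) acted on some face of a smaller graph $G' \in \mathcal{G}$, splitting that one face into either two faces (operation (a) creates 2 faces from 1) or five faces (operation (b): the central 4-cycle face plus four surrounding faces), which is consistent with the vertex/face count $n$ vs.\ $n-3$. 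So the inductive step is: given target aspect ratios and tolerance $\varepsilon$ for $G$, we must choose target aspect ratios and a (smaller) tolerance $\varepsilon'$ for $G'$ so that, after building the \SCR of $G'$ and then filling in the newly created face-region with the local gadget of Fig.~\ref{fig:graphs}, all of $G$'s gaps land within $\varepsilon$ of their targets.

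**The heart of the argument is a continuity/perturbation analysis of the two local gadgets.** Consider operation (a): we are handed a rectangular region (the gap of $f$ in the representation of $G'$, which itself we get to choose within a small window) with corners corresponding to $v_1,v_2,v_3,v_4$, and we must insert a single square $s(u)$ touching $s(v_1)$ and $s(v_3)$, thereby cutting the region into two sub-gaps $g_1, g_2$ with prescribed aspect ratios (up to $\varepsilon$). I would show that as the input rectangle's aspect ratio ranges over a small interval and the size/position of $s(u)$ ranges over its feasible set, the pair of output aspect ratios $(\alpha(g_1), \alpha(g_2))$ sweeps out a relatively open set — this is where the paper's own emphasis on the feasible interval $I_G$ comes in. Concretely: fix the input rectangle; sliding $s(u)$ along the diagonal direction between $s(v_1)$ and $s(v_3)$ and varying its side length gives a two-parameter family, and the induced map to $(\alpha(g_1),\alpha(g_2))$ is continuous and, one checks, has full-rank differential (or is at least an open map) on the interior of the feasible region. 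Hence any target pair in the image can be hit within $\varepsilon$, and — crucially — by the induction hypothesis applied to $G'$ we only need the *input* rectangle's aspect ratio to be controllable within some $\varepsilon' > 0$, which it is. Operation (b) is analogous but with four degrees of freedom for the inner 4-cycle $u_1,\dots,u_4$ and five output gaps; the same open-mapping principle applies, though the bookkeeping is heavier.

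**The main obstacle — and where I'd spend the most care — is verifying that the feasible region of each local gadget is genuinely open and that the aspect-ratio map is open onto a neighborhood of the target, uniformly enough to propagate through the induction.** There are two sub-issues. First, *solvability*: given an arbitrary target aspect ratio $\alpha_i > 0$ for a gap, is it actually in the image of the gadget map for *some* admissible input rectangle? One must show the gadget is flexible enough to produce gaps of every aspect ratio (arbitrarily tall or wide), possibly by trading off against the input rectangle shape and the square sizes; the abstract allusion that $I_G$ "may be an arbitrarily small neighborhood of any positive real" signals that the answer is yes but the feasible intervals can be tight, so the $\varepsilon$-slack in the statement is doing real work and cannot be dropped. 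Second, *compatibility of overlapping constraints*: when operation (b) is applied, the five new gaps share walls, so their aspect ratios are not independent parameters — I need the 4-parameter family of inner-square configurations to surject (locally) onto a 5-dimensional slab of aspect-ratio tuples near the target, which requires checking that the one "hidden" relation among the five gap shapes is automatically satisfied near any target, or that the input-rectangle freedom supplies the missing parameter.

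**Concretely, the step-by-step plan is:**

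\begin{enumerate}
\item Set up notation: identify the recursion tree of $G$, and for each node the face it was built into and the gaps it created; phrase the induction hypothesis as "for every assignment of target aspect ratios to the gaps of a subgraph and every $\delta>0$, an \SCR exists realizing them within $\delta$."
\item Base case $C_4$: exhibit the explicit four-square configuration and verify its single gap can be given any aspect ratio.
\item Operation (a) inductive step: describe the one-square gadget inside a given rectangle; parametrize its feasible configurations; prove the induced map to the pair of sub-gap aspect ratios is continuous and open; conclude that a sufficiently small perturbation of the input rectangle (guaranteed by the IH for $G'$) plus a choice of gadget parameters hits the two targets within $\varepsilon$.
\item Operation (b) inductive step: same program with the four-square gadget, five output gaps, and careful handling of the dependency among the five gap shapes; use the extra freedom in the input rectangle to close the parameter gap.
\item Assemble: combine the local realizations with the recursively built \SCR of $G'$, noting that placing the gadget inside a gap does not disturb any square outside that gap, so all previously-fixed aspect ratios are preserved; choose the tolerance passed down to $G'$ small enough (a finite product of "moduli of continuity" of the gadget maps along the recursion path) so the accumulated error stays below $\varepsilon$.
\end{enumerate}

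I expect steps 3 and 4 — the openness of the gadget aspect-ratio maps — to be the technical core, with step 4's constraint-counting being the single hardest point.
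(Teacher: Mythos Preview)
Your inductive framework matches the paper exactly: induction on $n$, base case $C_4$, and two cases for operations (a) and (b). The paper likewise reduces each case to a local lemma about subdividing a rectangle, then invokes the induction hypothesis on $G_0$ with a smaller tolerance so that the gap for $f_0$ lands in the window required by that lemma.

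However, you substantially overcomplicate operation (a) and underestimate operation (b). For (a), there is no open-mapping or full-rank argument to be made: the inserted square spans the full width of the gap (it must touch $s(v_1)$ and $s(v_3)$ on opposite vertical sides), so its side length equals the gap width and it subdivides the gap into three horizontal strips of aspect ratios $\alpha_1$, $1$, $\alpha_2$. The required input aspect ratio is simply $\alpha_0=\alpha_1+\alpha_2+1$, and the $\varepsilon$--$\delta$ version (Lemma~\ref{lem:1+}) is a one-line computation. Your ``two-parameter family'' does not exist here.

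For (b), your open-mapping plan has a genuine gap. A full-rank differential gives only \emph{local} surjectivity near a configuration you already possess; it cannot establish that an \emph{arbitrary} 5-tuple $(\alpha_1,\ldots,\alpha_5)\in(0,\infty)^5$ is realizable in the first place. This global existence statement (Lemma~\ref{lem:2}) is the main technical content of the paper and cannot be obtained by soft continuity or dimension-counting. The paper proves it constructively: start from a canonical initial configuration with the correct central ratio $\alpha_c$ and all four corner gaps equal to $\alpha_c^{-1}$, then drive the corner ratios to their targets one by one via scalings and translations of the four squares, passing through several named intermediate configurations (pinwheel, stacked, arrow, near-pinwheel) and a case analysis on which corners need to grow or shrink. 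Only after existence is established does continuity (Lemma~\ref{lem:2+}) enter, to convert point contacts into proper ones and supply the $\varepsilon$--$\delta$ robustness. Your proposal acknowledges ``solvability'' as an issue but offers no mechanism for it; that is precisely where the work lies.

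A minor point: there is no error accumulation along the recursion path as your step~5 suggests. Each inductive step computes a single target $\alpha_0$ for the face $f_0$ of $G_0$, obtains $\delta$ from the local lemma, and applies the induction hypothesis to $G_0$ with tolerance $\min(\varepsilon,\delta)$; all other faces of $G_0$ are already faces of $G$ and receive their targets directly with tolerance $\varepsilon$.
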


\begin{figure}[htbp]
 \centering
 \includegraphics{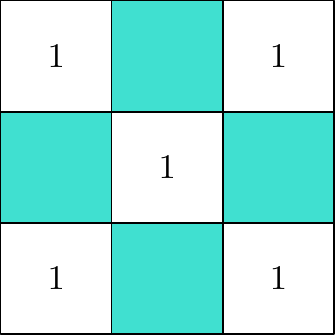}
 \caption{If all the gaps have aspect ratio 1, then scaling any of the squares to changing the point contacts into proper contacts would change the aspect ratios of the outer gaps.}
    \label{fig:square ratios}
\end{figure}

Figure~\ref{fig:square ratios} shows an example where the aspect ratios of the gaps cannot be specified exactly in a proper contact representation.

However, it turns out that $\mathcal{G}$ includes graphs that must be bounded by a rectangle whose aspect ratio is arbitrarily close to any given value, if they are inserted into a face of another graph in $\mathcal{G}$.

\begin{restatable}{theorem}{arbitraryaspectratios}\label{thm:arb}
For every $r, \delta>0$, there exists a bipartite plane graph $G\in \mathcal{G}$ with a 4-cycle as its outer face such that in every \SCR of $G$, the aspect ratio of the central gap between the four squares corresponding to that 4-cycle is confined to the interval $(r-\delta, r+\delta)$.
\end{restatable}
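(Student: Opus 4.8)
The plan is to recast the statement in terms of the \emph{feasible interval}. For $G\in\mathcal G$ with a $4$-cycle outer face, let $I_G\subseteq(0,\infty)$ be the set of aspect ratios of the rectangle bounded by the four outer squares, taken over all proper square contact representations; as in the abstract's framing this is an interval, and it is nonempty by Theorem~\ref{thm:main}. The theorem is then equivalent to producing, for each $r,\delta>0$, a graph $G$ with $I_G\subseteq(r-\delta,r+\delta)$. I would build such a $G$ by nesting a fixed \emph{reduction gadget} $H$ (a small member of $\mathcal G$ obtained from a single face by a bounded sequence of operations (a) and (b), carrying a marked outer $4$-face and a marked inner $4$-face) $k$ times, each copy placed inside the inner face of the previous one, and then closing off the innermost inner face by any completion guaranteed by Theorem~\ref{thm:2}.

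The engine is a recursion for $I_G$. If $G'$ is attached at the inner face of one copy of $H$, completing the \SCR of the gadget forces a relation between the aspect ratio $\alpha$ of its outer bounding rectangle and the aspect ratio $\beta$ of its inner gap; writing this as an interval-valued map $\alpha\in\Phi_H(\beta)$ (the interval recording the residual freedom in the gadget's square sizes), Theorem~\ref{thm:2} certifies that every admissible inner ratio can be completed, so that
\[ I_{H\circ G'}=\Phi_H(I_{G'}). \]
The crux is a \emph{contraction lemma}: in the logarithmic coordinate the map $\Phi_H$ satisfies
\[ \mathrm{diam}\big(\log\Phi_H(J)\big)\le\lambda\,\mathrm{diam}(\log J)+w_H \]
for a fixed $\lambda<1$ and a small intrinsic width $w_H$. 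Iterating $k$ copies of $H$ gives $\mathrm{diam}(\log I_{G_k})\le\lambda^k\,\mathrm{diam}(\log I_{G'})+w_H/(1-\lambda)$, which can be driven below any threshold by using a gadget with $w_H$ as small as desired—ideally a \emph{rigid} gadget whose mutual square contacts pin all size ratios in the spirit of Brooks--Smith--Stone--Tutte, so that $w_H=0$ and $\Phi_H$ is a genuine log-contraction—together with $k$ large.

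It remains to place the limiting value at $r$. The symmetric nesting by pure operation (b) has self-similar fixed point $\alpha=1$, settling only $r=1$; to reach an arbitrary target I would deform the gadget while keeping it rigid and contracting. Concretely, I would break the width/height symmetry by inserting a controlled number of operation-(a) vertices along one diagonal, biasing the bounding rectangle toward being wider or taller. The unique fixed point of the resulting contraction depends on this (discrete-plus-continuous) gadget parameter and, by an intermediate-value argument, sweeps a neighborhood of any prescribed $r$; choosing the gadget whose fixed point lies within $\delta/2$ of $r$ and then nesting deeply yields $I_{G_k}\subseteq(r-\delta,r+\delta)$, as required.

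The main obstacle is the contraction lemma itself: one must extract the exact dependence of the outer bounding-rectangle aspect ratio on the inner gap from the square-contact equations of the gadget, and prove that in log-coordinates it is a contraction \emph{uniformly} over the adversarial freedom in the remaining square sizes. Establishing $\lambda<1$ is where the geometry does the work, since a factor $\lambda\ge 1$ would let the residual freedom accumulate rather than wash out. A secondary difficulty is the tuning step: verifying that the asymmetric gadget remains simultaneously rigid (so $w_H$ stays small) and contracting (so $\lambda<1$ persists) as its fixed point is moved to $r$, which requires tracking both properties along the deformation.
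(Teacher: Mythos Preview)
Your proposal is a strategy outline rather than a proof: the two load-bearing steps are both left open. First, the contraction lemma is not established---you do not exhibit a gadget $H$, do not write down the map $\Phi_H$, and do not show $\lambda<1$. The Brooks--Smith--Stone--Tutte rigidity analogy is suggestive but does not transfer directly: in their setting the tile aspect ratios are the free variables pinned down by Kirchhoff-type equations, whereas here every tile is already a unit-aspect square, so ``rigidity'' must mean something else and you have not said what. Second, the tuning step is hand-waved: you assert that an intermediate-value argument lets the fixed point sweep through any $r$, but you neither describe the one-parameter family of gadgets nor verify that the fixed point moves continuously and surjectively while $\lambda<1$ is maintained.

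The paper's proof is both simpler and quite different in mechanism. It uses \emph{only} operation~(a), never operation~(b). The elementary Lemma~\ref{lem:spacing} says that in any \SCR of $K_{2,n}$ the two squares of the small side are at distance less than $\min(\ell_1,\ell_2)/(n-2)$; this lets one force a newly inserted square to sit arbitrarily close to a chosen side of the current gap. The construction then alternates ``horizontal'' and ``vertical'' phases of such insertions, each phase of size chosen greedily (largest size keeping $r$ inside the current interval). A direct calculation shows that a horizontal phase of size $k$ followed by a vertical phase of size $\ell$ sends the aspect ratio $h$ of the remaining gap to $(k\ell+1)/k - 1/\bigl(k(kh+1)\bigr)$, a M\"obius-type map whose interval-contraction factor is at most $1/2$ once $r\ge 1$. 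Thus the feasible interval halves at every iteration and converges to $r$ by the greedy choice of phase sizes---a continued-fraction-style argument that replaces your abstract contraction lemma and gadget-tuning with an explicit formula and a one-line estimate.
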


\subparagraph{Relation to rectangle tilings.} 
Theorem~\ref{thm:2} implies a tiling of a bounding box, where the tiles are squares (of aspect ratio 1) and rectangular gaps whose aspect ratios are prescribed up to an $\varepsilon$ error term. Note that the contact graph of this tiling, including squares and gaps, and four additional vertices for the four sides of the outer frame, is a triangulation.  Schramm~\cite{Schramm-Squares} (see also~\cite[Chap.~6]{Lovasz}) showed that for every inner triangulation $G$ of a 4-cycle without separating triangles there exists a rectangle contact representation of $G$ in which the rectangles have prescribed aspect ratios. However, some of the contacts between rectangles might be point contacts, and the interior of some of the separating 4-cycles may degenerate to a point. In the recursive construction of $\mathcal{G}$, step (ii) creates five separating 4-cycles in the triangulation of the tiling, one for each gap  (see Fig.~\ref{fig:separating-cycles}). In particular, if all five gaps degenerate to a point, then Schramm's result becomes trivial, but would not imply Theorem~\ref{thm:2}. The class of graphs defined in this paper is perhaps the first interesting case for which Schramm's approach is infeasible, as it cannot guarantee that the rectangles on the interior of the separating 4-cycles do not degenerate.

\begin{figure}[hp]	
	\centering
	\begin{subfigure}[b]{0.2\textwidth}
	    \centering
		\imagebox{0.2\textwidth}{\includegraphics[width=\textwidth, keepaspectratio]{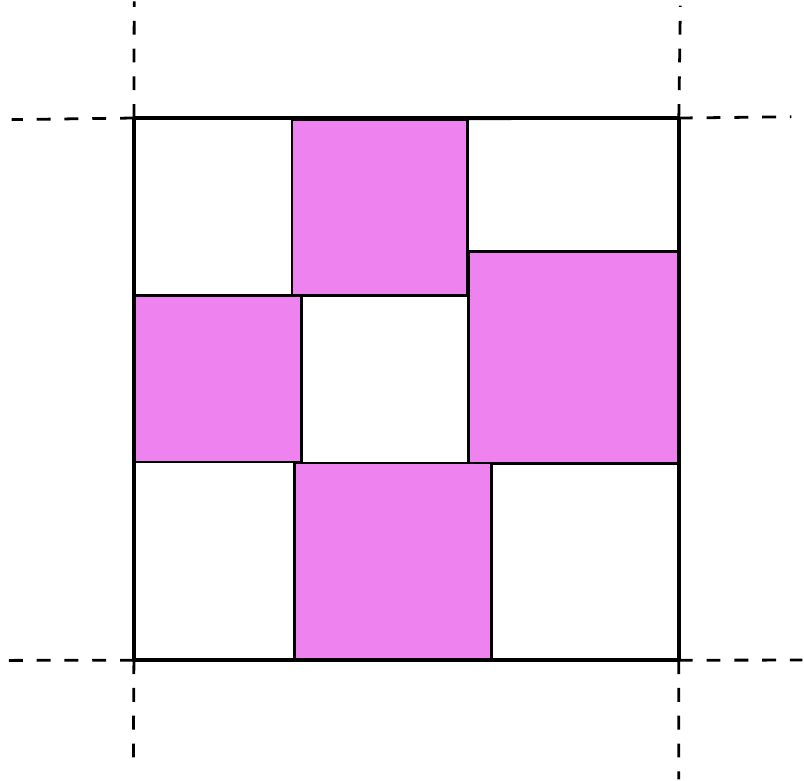}}
	\end{subfigure}
    \hspace{2cm}
	\begin{subfigure}[b]{0.2\textwidth}
	    \centering
		\imagebox{0.2\textwidth}{\includegraphics[width=\textwidth, keepaspectratio]{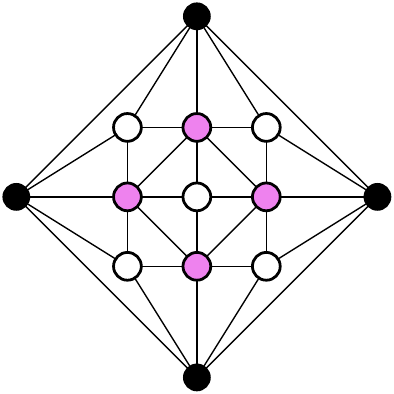}}
	\end{subfigure}
	\vspace{2.1cm}
	\caption{Left: a rectangular tiling with 9 tiles.  Right: the corresponding triangulation, where the outer 4-cycle corresponds to the four edges of the outer frame. }\label{fig:separating-cycles}
\end{figure}

\subparagraph{Outlook.} An obvious open problem is whether every triangle-free plane graph admits a proper square contact representation. Motivated by Schramm's results, one can also ask whether Theorem~\ref{thm:main} generalizes to the setting where each vertex of the graph is associated with an axis-aligned rectangle of given aspect ratio. 

\subparagraph{Terminology.} 
Let $G=(V,E)$ be an edge-maximal plane bipartite graph. In a square contact representation, every vertex $v_i$ corresponds to an axis-aligned square $s(v_i)$, and every bounded face to an axis-aligned rectangle $g(f_i)$, which is also called the \emph{gap} corresponding to $f_i$. The aspect ratio of an axis-aligned rectangle $r$ is $\mathrm{height}(r)/\mathrm{width}(r)$. The side length of a square $s$ is denoted by $\len(s)$. Scaling up a square from a corner by (or to) $x$ means to increase the width and height of the square by $x$ (or to $x$) in such a way that the position of the specified corner remains fixed.

\section{Maintaining a Square Contact Representation}

In this section, we show how to maintain a square contact representation of a graph in $\mathcal{G}$ under operations (a) and (b). Specifically, we show that one can insert one or four new squares corresponding to these operations in a rectangular gap of suitable size.  The following Lemmas are used in the proof of Theorem~\ref{thm:2} to recursively construct a \SCR for any given graph in $\mathcal{G}$.

\begin{lemma}\label{lem:1}
For every $\alpha,\beta>0$, there exists an axis-aligned rectangle that can be subdivided by two horizontal (resp., vertical) lines into three rectangles of aspect ratios $\alpha$, 1, and $\beta$, respectively.
\end{lemma}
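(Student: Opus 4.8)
The plan is to construct the rectangle explicitly by choosing a common dimension for the three sub-rectangles and reading off the other dimension from the prescribed aspect ratios. Consider the case of a vertical partition into three rectangles side by side, say $R_\alpha$, $R_1$, $R_\beta$ (the horizontal case is symmetric, obtained by reflecting across the line $y=x$). Since the three rectangles are stacked left-to-right and cut by two vertical lines, they must all share the same height $h$; pick $h=1$ for concreteness. Then the rectangle of aspect ratio $\alpha=\hght/\wdth$ and height $1$ has width $1/\alpha$, the middle square has width $1$, and the rectangle of aspect ratio $\beta$ has width $1/\beta$. Hence the outer rectangle has height $1$ and width $1/\alpha + 1 + 1/\beta$, and placing $R_\alpha$, $R_1$, $R_\beta$ consecutively from the left realizes the required subdivision.

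For the ``resp.''\ variant with two \emph{horizontal} lines, the three rectangles are stacked top-to-bottom and therefore share a common width $w$; take $w=1$. Then the rectangle of aspect ratio $\alpha$ has height $\alpha$, the middle square has height $1$, and the rectangle of aspect ratio $\beta$ has height $\beta$, so the enclosing rectangle has width $1$ and height $\alpha+1+\beta$. Stacking them in order top to bottom gives the subdivision. (Equivalently, this is the vertical construction applied to the reciprocals $1/\alpha, 1/\beta$ and then reflected.)

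There is essentially no obstacle here: the only thing to check is that all three pieces can simultaneously have a common dimension, which is automatic once one observes that two parallel cutting lines force the pieces to agree in the transverse direction, and then the aspect ratios determine the remaining side uniquely and positively (all of $1/\alpha$, $1$, $1/\beta$, and $\alpha$, $1$, $\beta$ are positive since $\alpha,\beta>0$). So the argument is just the display of the explicit dimensions above; if anything needs care in the writeup, it is only stating clearly which of the two dimensions is shared in each of the two cases.
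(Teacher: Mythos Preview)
Your proof is correct and follows essentially the same approach as the paper: an explicit construction where the shared transverse dimension is fixed and the other side lengths are read off from the aspect ratios, yielding an outer rectangle of aspect ratio $\alpha+\beta+1$ in the horizontal-cuts case. The only cosmetic difference is that you spell out both the vertical and horizontal variants, whereas the paper writes out only the horizontal-cuts case.
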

\begin{proof}
Let $R$ be a rectangle of aspect ratio $\alpha + \beta + 1$, with width $x$ and height $(\alpha+\beta+1)x$. Two horizontal lines at distance $\alpha x$ and $\beta x$ from the top and bottom side of $R$, resp., subdivide $R$ into rectangles of aspect ratios $\alpha$, $1$, and $\beta$, as required; see Fig.~\ref{fig:Rectangle Subdivision}.
\end{proof}

\begin{figure}[htbp]
 \centering
 \includegraphics{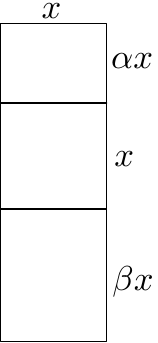}
 \caption{Constructing an outer rectangle given two inner rectangle aspect ratios.}
    \label{fig:Rectangle Subdivision}
\end{figure}

To establish Theorem~\ref{thm:main}, we need a stronger version of Lemma~\ref{lem:1} that allows the aspect ratios to vary within a small threshold.

\begin{lemma}\label{lem:1+}
For every $\alpha,\beta,\varepsilon>0$, there exists a $\delta>0$ such that any rectangle of aspect ratio $\gamma$ with $|\gamma-(\alpha+\beta+1)|<\delta$ can be subdivided by two horizontal lines into rectangles of aspect ratios $\alpha'$, 1, and $\beta'$ such that $|\alpha'-\alpha|<\varepsilon$ and $|\beta'-\beta|<\varepsilon$.
%
%
\end{lemma}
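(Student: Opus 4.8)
The plan is to view the subdivision from Lemma~\ref{lem:1} as the output of a continuous function and invoke continuity. Concretely, fix a rectangle of width $1$ and height $\gamma$ (aspect ratio $\gamma$); a subdivision by two horizontal lines is determined by the heights $a$ and $b$ of the top and bottom pieces, with the middle piece having height $\gamma-a-b$. For the middle piece to be a square of width $1$ we need $\gamma - a - b = 1$, i.e. $b = \gamma - 1 - a$. So after fixing $\gamma$, the family of valid subdivisions (top piece, unit square, bottom piece) is parametrized by the single variable $a$, and the resulting aspect ratios are $\alpha' = a$ and $\beta' = \gamma - 1 - a$. Thus I only need to produce, for every $\gamma$ near $\alpha+\beta+1$, a choice of $a$ near $\alpha$ making $\gamma - 1 - a$ near $\beta$.

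Here is the explicit choice: given $\gamma$, set $a := \alpha + \tfrac12(\gamma - (\alpha+\beta+1))$. Then $\alpha' = a$ and $\beta' = \gamma - 1 - a = \beta + \tfrac12(\gamma-(\alpha+\beta+1))$, so both deviations equal half of $|\gamma - (\alpha+\beta+1)|$. Therefore, taking $\delta := \min\{2\varepsilon,\ \alpha,\ \beta\}$ (the last two conditions ensure $a>0$ and $\beta'>0$, so the three pieces are genuine rectangles), any $\gamma$ with $|\gamma - (\alpha+\beta+1)| < \delta$ yields $|\alpha'-\alpha| < \varepsilon$ and $|\beta'-\beta|<\varepsilon$, as required. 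One should also note the degenerate orientation issue: if some $\gamma$ in the range were $\le 1$ the middle square could not fit, but $\delta \le \alpha + \beta$ forces $\gamma > 1$, so this does not occur. The vertical case follows by a $90^\circ$ rotation, exactly as in Lemma~\ref{lem:1}.

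There is essentially no hard step here — the statement is a robustness (open-condition) strengthening of Lemma~\ref{lem:1}, and the only thing to be careful about is bookkeeping: choosing $\delta$ small enough simultaneously to (i) keep both aspect ratios within $\varepsilon$ and (ii) keep all three sub-rectangles nondegenerate with the correct orientation. An alternative, less computational write-up would simply observe that the map $(\gamma,a)\mapsto(a,\gamma-1-a)$ is continuous, that it sends $(\alpha+\beta+1,\alpha)$ to $(\alpha,\beta)$, and then apply the definition of continuity to extract $\delta$ from $\varepsilon$; I would likely present the explicit formula above since it is short and makes the dependence $\delta = \min\{2\varepsilon,\alpha,\beta\}$ transparent, which may be convenient when this lemma is invoked inductively in the proof of Theorem~\ref{thm:2}.
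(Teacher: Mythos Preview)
Your proof is correct and follows essentially the same approach as the paper: both pick explicit heights for the three pieces and use the affine dependence of $(\alpha',\beta')$ on $\gamma$ to read off a suitable $\delta$. The only cosmetic difference is that the paper sets $\alpha'=\alpha$ exactly and absorbs the entire slack $\gamma-(\alpha+\beta+1)$ into $\beta'$ (taking $\delta=\min\{\alpha,\beta,1,\varepsilon\}$), whereas you split the slack evenly between the two outer pieces; either choice works.
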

\begin{proof}
Let $\delta=\min\{\alpha,\beta,1,\varepsilon\}$. Let $R$ be a rectangle of aspect ratio $\gamma$, where $|\gamma-(\alpha+\beta+1)|<\delta$, with width $x$ and height $\gamma x$. 
Two horizontal lines at distance $\alpha x$ and $(1+\alpha)x$ from the top side of $R$ subdivide $R$ into rectangles of aspect ratios $\alpha$, $1$, and $\beta'=\gamma-\alpha-1$. Note that $\beta'>0$ and $|\beta'-\beta|=|\gamma-(\alpha+\beta+1)|<\delta\leq \varepsilon$.
\end{proof}

\begin{restatable}{lemma}{lemtwo}\label{lem:2}
For every $\alpha_1,\ldots, \alpha_5>0$, there exists an axis-aligned rectangle $R$ that can be subdivided into four squares and five rectangular gaps of aspect ratios $\alpha_1,\ldots ,\alpha_5$ such that (refer to Figs.~\ref{fig:graphs-b} and \ref{fig:configurations})
\begin{itemize}
    \item the four squares are each in contact with a side of $R$, and their contact graph is a 4-cycle (but the contacts along the 4-cycle are not necessarily proper);
    \item the first four gaps are each incident to the top-left, bottom-left, bottom-right, and top-right corner of $R$, respectively, and the fifth gap lies in the interior of $R$.
\end{itemize}
\end{restatable}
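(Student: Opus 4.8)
The plan is to build $R$ from the inside out around a central rectangle of aspect ratio $\alpha_5$, seating one square against each of its four sides in a windmill pattern, and to choose among a small number of combinatorially distinct windmill patterns according to how $\alpha_1,\dots,\alpha_5$ compare; these are the configurations of Fig.~\ref{fig:configurations}.

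\emph{The configuration.} After a global scaling, take the central gap to be the rectangle $[0,1]\times[0,\alpha_5]$. Seat a square against each of its four sides: squares $s_t,s_b$ of side length at least $1$ against the top and bottom sides, and squares $s_\ell,s_r$ of side length at least $\alpha_5$ against the left and right sides. These inequalities are forced rather than chosen: each side of the central gap must be covered in full by a single square, because in a square contact representation two gaps never share a side of positive length and the four squares bound the four sides of the central gap bijectively. (This is exactly what will make some of the $4$-cycle contacts improper.) Each of the four squares may overhang the central gap at either end of the side it is seated against; let the eight overhang lengths be nonnegative parameters. Let $R$ be the bounding box of the five rectangles placed so far; then exactly one rectangular gap remains at each corner of $R$, and these are the gaps of aspect ratios $\alpha_1,\dots,\alpha_4$ required by the statement, while $\{s_t,s_\ell,s_b,s_r\}$ has a $4$-cycle as its contact graph. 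The width and height of $R$ and of each corner gap are explicit linear expressions in the eight overhangs and $\alpha_5$, so prescribing $\alpha_1,\dots,\alpha_4$ yields four rational equations to be solved for the overhangs.

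\emph{Why several configurations.} At each corner of $R$ the two squares meeting there cannot both overhang toward that corner without overlapping, so at each of the four corners one of two overhang lengths must vanish. This leaves $2^4$ combinatorial windmill types, which collapse under the dihedral symmetry group of $R$ — which acts on the target tuple by a symmetry of $(\alpha_1,\dots,\alpha_4)$ together with possibly the involution $\alpha_5\mapsto1/\alpha_5$ — to only a handful. In the most symmetric type (all side-overhangs of $s_t$ and $s_b$ vanish, so $s_t$ and $s_b$ have side length exactly $1$), $R$ splits into three vertical strips side by side: the middle one is $s_t / g_5 / s_b$, while by Lemma~\ref{lem:1} the left strip can be realized as a rectangle subdivided into pieces of aspect ratios $\alpha_1,1,\alpha_2$ and the right strip into pieces of aspect ratios $\alpha_4,1,\alpha_3$; equalizing the three strip heights then determines $R$. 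A short computation shows this type realizes precisely the targets obeying a system of inequalities among $\alpha_1,\dots,\alpha_5$ (roughly: $\alpha_5$ not too large, $\alpha_1$ and $\alpha_2$ not too far apart, likewise $\alpha_3$ and $\alpha_4$). Its $90^\circ$ rotation covers a complementary region, and the remaining ``mixed'' types fill in the rest.

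\emph{The main obstacle.} Writing down $R$ and the corner gaps and solving the (linear) systems for the overhangs in each type is routine. The real difficulty is the coverage argument: each windmill type realizes only the targets in a region cut out by inequalities among $\alpha_1,\dots,\alpha_5$, and one must verify that the finitely many types, together with the images of their regions under the eight symmetries of $R$ and the induced involution $\alpha_5\mapsto1/\alpha_5$, exhaust $(0,\infty)^5$, and that on the boundary between two regions the relevant overhang degenerates to $0$ (so the affected $4$-cycle contact becomes a point, as the statement permits) rather than turning negative. Making this case distinction precise — and matching it to the pictures in Fig.~\ref{fig:configurations} — is the bulk of the proof.
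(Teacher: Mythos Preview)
Your plan is a genuinely different route from the paper's. The paper does \emph{not} solve the linear systems type by type. Instead it starts from a single \emph{initial configuration} (each square has side length equal to the side of the central gap it covers, so all four corner gaps have aspect ratio $\alpha_c^{-1}$) and then \emph{continuously deforms} the picture, adjusting the corner gaps to their targets one after another. The supporting Lemmas~\ref{lem:slide-scale}--\ref{lem:near-pinwheel-resolution} each say: from a given special configuration (pinwheel, stacked, arrow, near-pinwheel) one can drive a chosen corner ratio monotonically to any target while holding the other three fixed. The proof of Lemma~\ref{lem:2} is then a short case analysis on the order in which the four corner ratios are set, invoking whichever of these lemmas matches the configuration reached so far. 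The existence of the final subdivision is thus an intermediate-value argument, and no global coverage of $(0,\infty)^5$ by algebraic regions is ever written down.

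Your static approach is in principle viable and arguably more explicit, but the part you flag as ``the bulk of the proof'' really is the whole proof: one must solve the $4\times 4$ linear system for each combinatorial type, extract the inequality region where all four free overhangs are nonnegative \emph{and} the extra constraints hold (each square reaches its side of $R$; no square overshoots the bounding box), and then verify that these regions, transported by the $D_4$-action, tile $(0,\infty)^5$. None of this is done, and it is not obvious in advance that the boundaries match up cleanly or that no target tuple falls through the cracks. A second subtlety: your opening claim that each side of the central gap is covered by a single square is correct, but the resulting lower bounds $\len(s_t),\len(s_b)\ge 1$ and $\len(s_\ell),\len(s_r)\ge\alpha_5$ are exactly what force some overhangs to be zero on the region boundaries, so your inequality regions will be closed on some faces and open on others; getting this right is part of the casework you have deferred. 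The paper's continuous-deformation strategy is designed precisely to avoid this bookkeeping.
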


The proof of Lemma~\ref{lem:2} requires some preparation, and is presented later in this section. For convenience, we will rename $\alpha_1,\ldots,\alpha_5$ respectively based on the positions of the gaps to which they correspond as $\alpha_c$ (center), \aul (top-left), \aur (top-right), \adr (bottom-right), \adl (bottom-left). Also, name the squares incident to the top, bottom, right, and left side of $R$ as \su, \sd, \sr, and \slf, respectively.

We will prove Lemma~\ref{lem:2} by starting with an initial configuration (Fig.~\ref{fig:initial-config}), where the aspect ratio of the center gap is already $\alpha_c$, and there are improper contacts between adjacent squares of the cycle. Then we incrementally modify the configuration, while the center gap remains fixed, until all remaining gaps have the target aspect ratios \aul, \aur, \adr, and \adl. 
We denote the current aspect ratios of these gaps by \gul, \gur, \gdr, and \gdl in the same fashion as \aul$ ,\ldots, $\adl.   
We next define the initial configuration and four additional special configurations that play a role in intermediate steps of the incremental construction. 

\subparagraph{Initial configuration.}
To create the initial configuration, we start by drawing the interior gap and placing \su, $\ldots$, \slf incident to it, with each of their side lengths equal to the side of the interior gap to which they are incident (see Fig.~\ref{fig:initial-config}).
Note that the aspect ratios of every outer gap is $\alpha_c^{-1}$ in this configuration.

\begin{figure}[htbp]
 \centering
 \includegraphics{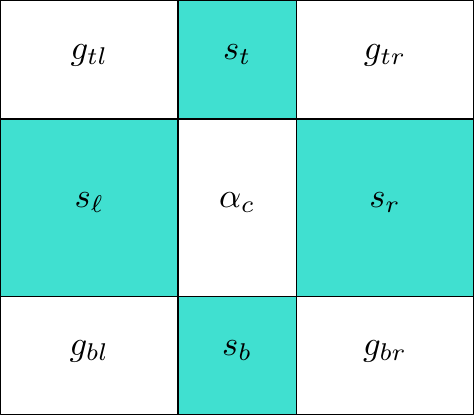}
 \caption{The initial configuration, with squares and gap aspect ratios labeled.}
    \label{fig:initial-config}
\end{figure}

\subparagraph{Pinwheel configuration.} A \emph{clockwise} pinwheel configuration is defined as follows (see Fig.~\ref{fig:1a}):
\begin{itemize}
    \item the bottom-right corner of \su lies on the left side of \sr,
    \item the bottom-left corner of \sr lies on the top side of \sd,
    \item the top-left corner of \sd lies on the right side of \slf,
    \item the top-right corner of \slf lies on the bottom side of \su.
\end{itemize}
A \emph{counterclockwise} pinwheel can be obtained by a reflection.

\begin{figure}[h]	
	\centering
	\vspace{\baselineskip}
	\begin{subfigure}[t]{2in}
		\centering
		\includegraphics{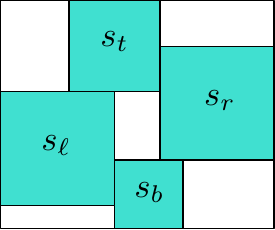}
		\caption{Clockwise Pinwheel}\label{fig:1a}		
		 \vspace{\baselineskip}
	\end{subfigure}
	\begin{subfigure}[t]{2in}
		\centering
		\includegraphics{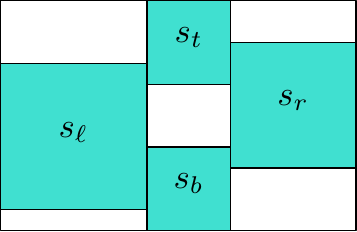}
		\caption{Vertical Stacked}\label{fig:1b}
       \vspace{\baselineskip}
	\end{subfigure}
	\vspace{\baselineskip}
	\begin{subfigure}[t]{2in}
		\centering
		\includegraphics{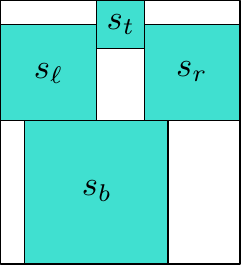}
		\caption{Downward Arrow}\label{fig:1c}
	\end{subfigure}
	\begin{subfigure}[t]{2in}
		\centering
		\includegraphics{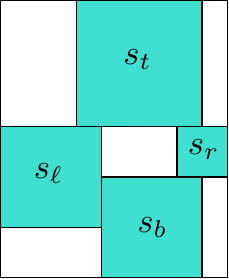}
		\caption{Clockwise Near-Pinwheel with reversed contact between \sr and \su}\label{fig:1d}
	\end{subfigure}
	\caption{Examples of four special configurations.}\label{fig:configurations}
\end{figure}

\subparagraph{Stacked configuration.} We define a \emph{vertical} stacked configuration as follows (see Fig.~\ref{fig:1b}):
\begin{itemize}
    \item the top-right corner of \sd lies on the left side of \sr,
    \item the top-left corner of \sd lies on the right side of \slf,
    \item the bottom-right corner of \su lies on the left side of \sr,
    \item the bottom-left corner of \su lies on the right side of \slf.
\end{itemize}                 
A \emph{horizontal} stacked configuration can be obtained by a $90^\circ$ rotation.

\subparagraph{Arrow configuration.} We define a \emph{downward} arrow configuration as follows (see Fig.~\ref{fig:1c}):
\begin{itemize}
    \item the top-right corner of \sd lies on the bottom side of \sr,
    \item the top-left corner of \sd lies on the bottom side of \slf,
    \item the bottom-right corner of \su lies on the left side of \sr,
    \item the bottom-left corner of \su lies on the right side of \slf.
\end{itemize}
\emph{Upward, leftward}, and \emph{rightward} arrow configurations can be obtained by rotation.  We also define the \emph{directional} square of the arrow configuration to be the one furthest in the direction after which the configuration is named (e.g., \sd for a downward arrow configuration).

\subparagraph{Near-pinwheel configuration.} We define a \emph{clockwise} near-pinwheel configuration as a configuration which would be a clockwise pinwheel configuration if one of the contacts between squares was changed from vertical to horizontal, or vice-versa (see Fig.~\ref{fig:1d}). This contact is called the \emph{reversed} contact of the near-pinwheel configuration.  A \emph{counterclockwise} near-pinwheel configuration can be obtained by reflection.

Lemmas~\ref{lem:slide-scale}--\ref{lem:near-pinwheel-resolution} below concern transformations of these special configurations, and are used in the proof of Lemma~\ref{lem:2}.

\begin{lemma}\label{lem:slide-scale}
Assume that the top-left corner of \sr is on the right side of \su and the bottom-left corner of \sr is on the right side of \sd, and let \aur$>$\gur be given. There exists a $d>0$ such that if we slide \sr upward by $d$ and scale it up by a factor of $d/g_{br}$ from its bottom-left corner, then no aspect ratio other than \gur changes, and 
after the transformation we have \aur = \gur, or \aur$>$\gur and
\sr and \sd have a point contact.  Similar statements hold after reflections and rotations of the configuration.
\end{lemma}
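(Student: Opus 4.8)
The plan is to set up coordinates for the configuration, isolate the gaps that the prescribed transformation can possibly affect, reduce the statement to a one-parameter continuity argument, and finish with the intermediate value theorem.

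First I would choose coordinates so that $s_r$ has its left side on the line $x=L$, its bottom side on $y=B$, and side length $\ell$. The hypothesis, together with Lemma~\ref{lem:2}, pins down the incidences we need: the right sides of $s_t$ and $s_b$ both lie on $x=L$; the top side of $s_t$ is the top side of $R$, at some height $T$; the bottom side of $s_b$ is the bottom side of $R$, at some height $Y$; and the right side of $s_r$ is the right side of $R$. Reading off the two gaps touching the right side, the bottom-right gap is $[L,L+\ell]\times[Y,B]$, so $g_{br}=(B-Y)/\ell$, and the top-right gap is $[L,L+\ell]\times[B+\ell,T]$. The key observation is that the center, top-left and bottom-left gaps all lie in the half-plane $x\le L$, and each of them is bounded only by $s_t$, $s_b$, $s_\ell$ and by the portion of the left side of $s_r$ on the line $x=L$; consequently any motion of $s_r$ that keeps its left side on $x=L$ leaves those three gaps unchanged.

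Second I would parametrize the transformation by $d\ge 0$: after sliding $s_r$ up by $d$ and enlarging it by $d/g_{br}$ from its (new) bottom-left corner, $s_r$ has left side on $x=L$, bottom side on $y=B+d$, and side length $\ell+d/g_{br}$. Since $s_r$ only grows rightwards and upwards, for the values of $d$ we care about it still determines the right side of $R$, which moves to $x=L+\ell+d/g_{br}$, while the other three sides of $R$ and the squares $s_t,s_b,s_\ell$ do not move; hence the center, top-left and bottom-left gaps are literally unchanged. The bottom-right gap becomes $[L,L+\ell+d/g_{br}]\times[Y,B+d]$, of height $g_{br}\ell+d$ and width $\ell+d/g_{br}$, whose ratio is exactly $g_{br}$; this identity is precisely what forces the enlargement factor $d/g_{br}$, and it shows $g_{br}$ is preserved. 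The top-right gap becomes $[L,L+\ell+d/g_{br}]\times[B+d+\ell+d/g_{br},T]$, whose height is an affine function of $d$ with negative slope and whose width is an affine function of $d$ with positive slope; so $g_{tr}$ is a continuous, strictly monotone function of $d$ that equals the original value of $g_{tr}$ at $d=0$. Thus exactly one aspect ratio, $g_{tr}$, changes.

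Third I would pin down the range of legitimate $d$ and conclude by the intermediate value theorem. The transformation stays valid as long as (i) the top of $s_r$, namely $B+d+\ell+d/g_{br}$, does not exceed $T$, which fails exactly when the top-right gap degenerates; and (ii) $B+d$ does not exceed the top of $s_b$, so that the bottom-left corner $(L,B+d)$ of $s_r$ still lies on the right side of $s_b$. At the threshold in (ii) this corner coincides with the top-right corner of $s_b$, so $s_r$ and $s_b$ meet in a single point; and one checks that (ii) is also the exact condition keeping the bottom-right and center gaps backed by $s_b$, hence genuine rectangles. Now, as $d$ grows from $0$, $g_{tr}$ moves monotonically from its initial value, in the direction determined by the comparison of $\alpha_{tr}$ with $g_{tr}$; let $d^{\ast}$ be the least $d>0$ at which either $g_{tr}=\alpha_{tr}$ or equality holds in (ii) (condition (i) is not the binding one, as it corresponds to $g_{tr}=0$). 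Take $d=d^{\ast}$: in the first case $g_{tr}$ has become $\alpha_{tr}$ and nothing else has changed, while in the second case $g_{tr}$ has not yet reached $\alpha_{tr}$ (so the hypothesized inequality still holds) and $s_r$ and $s_b$ now have a point contact. The reflected and rotated statements follow by relabeling the coordinate axes. The main obstacle is the geometric bookkeeping behind the second and third steps: one must verify that throughout the slide no new incidence between two squares is ever created, that the only incidence which can break is the contact between $s_r$ and $s_b$, and that no gap other than the top-right and bottom-right ones has its defining rectangle disturbed; granting that, what remains is the one-line identity $(g_{br}\ell+d)/(\ell+d/g_{br})=g_{br}$ together with the continuity and monotonicity of $g_{tr}$ in $d$.
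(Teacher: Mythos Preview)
Your proposal is correct and follows essentially the same approach as the paper: parametrize the slide-and-scale by $d$, verify the identity showing $g_{br}$ is invariant, observe that the top-right gap's height is affine decreasing and its width affine increasing so $g_{tr}$ is strictly monotone, and take $d$ to be the smaller of the two thresholds (target aspect ratio reached, or $s_r$--$s_b$ contact degenerates to a point). You are in fact more thorough than the paper, since you explicitly argue that the center, top-left, and bottom-left gaps lie in the half-plane $x\le L$ and are therefore untouched---a point the paper's proof leaves implicit---and your phrase ``in the direction determined by the comparison of $\alpha_{tr}$ with $g_{tr}$'' is the only loose spot, but that vagueness merely mirrors a sign slip in the lemma statement itself (the transformation always \emph{decreases} $g_{tr}$, so the hypothesis should read $\alpha_{tr}<g_{tr}$, consistent with how the lemma is invoked later).
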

\begin{proof}
Let the bottom-right gap have height $h_1$ and width $w$ prior to the transformation. Assume that we slide \sr upward by some amount $d>0$ and scale it up by a factor of $d/g_{br}$ from its bottom-left corner.  After the transformation, it has height $h_1 + d$ and width $w + \frac{dw}{h_1}$. 
As
\[\frac{h_1}{w} = \frac{h_1+d}{w+\frac{dw}{h_1}},\]
the aspect ratio of the bottom-right gap has not changed. Let the height of top-right gap be $h_2$ prior to the transformation, and note that its width is also $w$.  After the transformation, it has height $h - d$ and width $w + \frac{d}{g_{br}}$.  Thus, its height monotonically decreases in $d$, and its width monotonically increases in $d$, so \gur monotonically decreases in $d$.  We can choose $d=\min(d_1, d_2)$, where $d_1\geq 0$ is the value which would reduce the contact between \sr and \sd to a single point after the transformation, and $d_2\geq 0$ is the value which would achieve \aur = \gur.
\end{proof}

\begin{lemma}\label{lem:pinwheel-resolution}
A clockwise (counterclockwise) pinwheel configuration can be transformed such that \gdr or \gul (\gur or \gdl) increases to, or such that \gur or \gdl (\gdr or \gul) decreases to any amount $\gamma>0$, while all other aspect ratios remain the same.
\end{lemma}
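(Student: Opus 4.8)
The plan is to coordinatize a clockwise pinwheel, show that clockwise pinwheels with four prescribed gap aspect ratios form a one-parameter family, and track the fifth gap along it.

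First I would place $R$ with its bottom-left corner at the origin and record the four side lengths together with $W$ and $H$. Using that each of \su, \sr, \sd, \slf touches a side of $R$ and the four corner-on-side equations of the clockwise pinwheel, the configuration space is six-dimensional, and the five gap widths and heights are affine functions of these coordinates (with the aspect ratios their ratios). I would then impose the four equations fixing $\alpha_c$ and, say, \gul, \gur, \gdl, and normalize the scale by $\len(s_\ell)=1$. Each of these five equations is linear after clearing one denominator, so solving them writes $W$, $H$, $\len(s_b)$, $\len(s_r)$ as affine functions of the single remaining parameter $t:=\len(s_t)$. The set $J$ of admissible $t$ — those for which the picture is a genuine clockwise pinwheel with all five gaps of positive area and all corner-on-side contacts actually realized — is an intersection of finitely many half-lines, hence an interval, and it contains the value $t_0$ of the given pinwheel.

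The heart of the argument is the behavior of \gdr on $J$: its aspect ratio is $\len(s_b)/(W-\len(s_\ell)-\len(s_b))$, and in this family $\len(s_b)$ is a decreasing affine function of $t$ while the width $W-\len(s_\ell)-\len(s_b)$ is an increasing one, so \gdr is strictly decreasing in $t$ on $J$. I would then identify the left endpoint of $J$. As $t$ decreases, only a few constraints can fail: positivity of the width of \gdr, of $\len(s_r)$ (equivalently of \gur), of the width of the center gap, and the inequality expressing that the bottom-right corner of \su lies on — rather than above — the left side of \sr. A short sign check shows the middle two are dominated by the first; the delicate point is the last, which corresponds to the \su–\sr contact degenerating into a near-pinwheel. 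Substituting the affine expressions, the condition for that inequality to tighten as $t$ decreases works out to be exactly $\alpha_c\alpha_{tr}(1+\alpha_{t\ell}^{-1})<1$, and in that regime the parameter value at which it would become tight is negative, hence unreachable. Therefore the left endpoint of $J$ is precisely where the width of \gdr shrinks to $0$ while its height stays bounded away from $0$, so \gdr $\to+\infty$ there. Since \gdr is continuous and strictly decreasing on $J$ and equals its current value at $t_0$, it attains every value in $(g_{br}(t_0),\infty)$; sliding $t$ from $t_0$ to the value giving $g_{br}=\gamma$ is a continuous deformation through clockwise pinwheels that increases \gdr to $\gamma$ while keeping $\alpha_c$, \gul, \gur, \gdl fixed.

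The remaining three assertions follow by symmetry. A rotation by a multiple of $90^\circ$ sends a clockwise pinwheel to a clockwise pinwheel and cyclically permutes $(g_{t\ell},g_{tr},g_{br},g_{b\ell})$; a $90^\circ$ or $270^\circ$ rotation additionally inverts each gap's aspect ratio, since it swaps heights and widths. So ``increase \gdr to an arbitrarily large value'' becomes ``increase \gul to an arbitrarily large value'' under $180^\circ$, and ``decrease \gur (resp.\ \gdl) to an arbitrarily small value'' under the two $\pm90^\circ$ rotations; the counterclockwise statements are mirror images. I expect the main obstacle to be exactly the step flagged above — confirming that, as the parameter is pushed toward making the free gap large, it is that gap's width, not a corner-on-side contact collapsing, that becomes the binding constraint. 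Should a residual case slip through, the escape route is the disjunct built into the statement: run instead the companion one-parameter family in which \gul rather than \gdr is the free gap, which by the same analysis — together with the slide-and-scale move of Lemma~\ref{lem:slide-scale} used to pass the intervening near-pinwheel/stacked/arrow boundary configurations — is unbounded there.
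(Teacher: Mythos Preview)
Your approach is genuinely different from the paper's. The paper does not coordinatize and solve a linear system; instead it describes an explicit cascade of four scalings, all anchored at the corners of the (fixed) center gap: scale $s_b$ from its top-left corner by $d_1$, then $s_\ell$ from its top-right corner by $d_2$ to restore \gdl, then $s_t$ from its bottom-right corner by $d_3$ to restore \gul, then $s_r$ by $d_4$ to restore \gur. Three one-line geometric observations (e.g.\ ``\slf is not in contact with the bottom of $R$, so $\len(s_\ell)<\len(s_b)+h$'') give $d_1>d_2>d_3>d_4$, and hence the width of the bottom-right gap strictly decreases while its height increases, so \gdr is monotone in $d_1$ and tends to $\infty$.

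Your algebraic route is viable in principle, but the boundary analysis is where it actually leaks. You list four constraints and then assert that two of them are ``dominated'' and that the remaining corner-on-side constraint reduces to the identity $\alpha_c\alpha_{tr}(1+\alpha_{t\ell}^{-1})<1$; neither claim is verified, and there are further positivity and corner-on-side inequalities you have not listed. The paper's choice of anchors is what makes this step disappear: because every square is scaled from its corner on the center gap, the center gap never moves and every corner-on-side incidence is preserved automatically for all $d_1>0$; the \emph{only} inequality that can tighten is the width of \gdr, which is exactly the one you want to send to zero. In your normalization $\len(s_\ell)=1$ some squares shrink, which is why the constraint bookkeeping looks hard.

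Your ``escape route'' also misreads the lemma. The ``or'' in the statement is a list of four separate conclusions (each obtainable from one by rotation/reflection), not a disjunction you may satisfy with whichever gap is convenient; the downstream lemmas invoke it to adjust a \emph{specified} gap. So if your $g_{br}$-family runs into a wall, switching to the $g_{t\ell}$-family (and patching across near-pinwheels via Lemma~\ref{lem:slide-scale}) does not prove what is needed --- and by $180^\circ$ symmetry the same wall would appear there anyway. The fix is not the disjunct; it is the anchoring trick above, which removes the wall entirely.
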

\begin{proof}
Assume w.l.o.g.\ that we are given a \emph{clockwise} pinwheel configuration, and we wish to increase the aspect ratio \gdr to $\gamma>$\gdr.
If we scale up \sd from its top-left corner by some amount $d_1$, then \gdl will increase.  To account for this change, though, we can scale up \slf as well so that \gdl remains constant. Let $h$ be the height of the central gap.  Then, 
\[g_{bl} = \frac{\len(s_b) - \len(s_\ell) + h}{\len(s_\ell)}.\]  
After increasing the length of \sd by $d_1$, we must then increase the length of \slf by some amount $d_2$ such that
\[\frac{\len(s_b) - \len(s_\ell) + h}{\len(s_\ell)} = \frac{(\len(s_b) + d_1) - (\len(s_\ell) + d_2) + h}{\len(s_\ell) + d_2}\]
so that \gdl does not change.  Solving this equation for $d_2$ yields
\[d_2 = d_1 \frac{\len(s_\ell)}{\len(s_b) + h}.\]
Because \slf is not in contact with the bottom of $R$, $\len(s_\ell) < \len(s_b) + h$.  Thus, $d_2 < d_1$.

Let $w$ be the width of the central gap.  Then,
\[g_{tl} = \frac{\len(s_t)}{\len(s_\ell) - \len(s_t) + w}.\] 
After increasing the length of \slf by $d_2$, to maintain \gul, we must increase the length of \su by some amount $d_3$ such that

\[\frac{\len(s_t)}{\len(s_\ell) - \len(s_t) + w} = \frac{\len(s_t) + d_3}{(\len(s_\ell) + d_2) - (\len(s_t) + d_3) + w}.\] 
Solving for $d_3$ gives

\[d_3 = d_2 \frac{\len(s_t)}{\len(s_\ell) + w}.\]
Because \su is not in contact with the left side of $R$, $\len(s_t) < \len(s_\ell) + w$.  Thus, $d_3 < d_2$.

After increasing the length of \su by $d_3$, we must increase the length of \sr by some amount $d_4$ to maintain \gur. Similarly to the argument above, we obtain $d_4 < d_3$, and thus, $d_4 < d_1$.

So, this series of transformations, preserving \gur, \gul, \gdl, and the central gap, increases the length of \sd by $d_1$, which is more than the amount it increases the length of \sr, $d_4$.  Specifically, 
\[d_4 = \frac{d_1\len(s_\ell)\len(s_t)\len(s_r)}{(\len(s_b) + h)(\len(s_\ell) + w)(\len(s_t) + h)} < d_1.\]
Before the transformations, the top boundary of \sd overlapped the bottom boundary of \sr by some amount $x$.  After the transformations, it overlaps by $x + d_1$, because \sd has been scaled up from its top-left corner.  

The width of the bottom-right gap equals $\len(s_r)$ minus the length of the common boundary between \sr and \sd.  Because the length of that common boundary increases by $d_1$, but $\len(s_r)$ increases only by $d_4 < d_1$, the width decreases. Consequently, the width of the bottom-right gap decreases and its height increases linearly in $d_1$. Overall, \gdr monotonically increases in $d_1$. We have constructed a series of transformations that can increase \gdr to any $\gamma > g_{br}$ with a suitable $d_1$. 
\end{proof}

\begin{lemma}\label{lem:stacked-resolution}
A vertical (resp., horizontal) stacked configuration with a point contact between two of the squares can be transformed such that the aspect ratio of the outer gap between those squares increases (resp., decreases) to any amount $\gamma>0$ while all other aspect ratios remain the same.
\end{lemma}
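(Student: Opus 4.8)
The plan follows the template of the proof of Lemma~\ref{lem:pinwheel-resolution}. First I would reduce to a single case. A $90^\circ$ rotation sends a horizontal stacked configuration to a vertical one and replaces every aspect ratio by its reciprocal, so the horizontal ``decreases to $\gamma$'' assertion is exactly the vertical ``increases to $\gamma^{-1}$'' assertion; and a reflection lets us assume that the point contact is between \su and \slf, so that the gap whose aspect ratio \gul we want to control is the top-left gap. A short computation in coordinates shows that in this point-contact state \su and \slf meet only at the common point that is \slf's top-right corner, \su's bottom-left corner, and the top-left corner of the central gap, and that \gul there equals the ratio of the side lengths of \su and \slf.

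Next I would realize the transformation as a cascade of slide-and-scale moves of the kind in Lemma~\ref{lem:slide-scale}. The driving move shrinks \slf (scaling it down from the corner that keeps \slf anchored to the left side of $R$), which both narrows the top-left gap and loosens its contact with \su. This move must then be repaired around the $4$-cycle: slide \su down so that its bottom edge again meets \slf, which shortens the central gap; scale \sd up so that its top edge rises back and the central gap is restored, which widens \sd and displaces \sr; slide \sr to restore the two right-hand corner gaps; and this feeds one final correction back into \su. Exactly as in Lemma~\ref{lem:pinwheel-resolution}, each correction is obtained from the previous displacement by multiplying by a factor that is a product of ratios of the form (side of a square)$/$(side of a neighbour $+$ a side of the central gap), each strictly below $1$ because no square in the configuration reaches from one side of $R$ to the opposite side; hence the corrections form a convergent geometric chain, the composite map is a well-defined \SCR, and by construction the central gap and the other three corner gaps are unchanged. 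One should also allow for a correction being blocked when some contact degenerates to a point before its target is reached: in that case the configuration has become one of the other special configurations and the corresponding sibling lemma takes over, so it suffices to realize a sufficiently small change of \gul at each step.

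Finally I would verify monotonicity and range. Along the resulting one-parameter family the top-left gap's width shrinks strictly (it tracks \slf) while its height grows strictly (the cascade pushes the top side of $R$ upward), so \gul is a strictly increasing, continuous function of the driving parameter; letting \slf shrink toward a point drives \gul to $+\infty$, and running the symmetric construction with \slf scaled up drives \gul down toward $0$, so the family realizes every value in $(0,\infty)$. This yields the vertical ``increases to any $\gamma>0$'' claim, and undoing the $90^\circ$ rotation yields the horizontal ``decreases'' claim. The hard part is the cascade bookkeeping: one must check that every compensating move can in fact be performed without forcing a square outside $R$ or creating an overlap — this is precisely where the contraction factor strictly below $1$ is needed — and that, once the chain closes up, it is \gul alone that has changed; granting that, the monotonicity and continuity of \gul are routine.
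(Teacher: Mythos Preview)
Your approach is substantially more complicated than necessary and leaves a real gap. The paper's proof is short: scale up $s_b$ from its top-left corner to increase $g_{br}$; to compensate for the resulting change in $g_{b\ell}$, apply the slide-and-scale move of Lemma~\ref{lem:slide-scale} to $s_\ell$, maintaining $g_{t\ell}$. This either achieves $g_{br}=\gamma$ outright, or it produces a point contact between $s_\ell$ and $s_t$; but in the latter case the configuration has become a pinwheel, and Lemma~\ref{lem:pinwheel-resolution} finishes. No new cascade analysis is needed.

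Your proposal instead tries to replicate the full pinwheel-style cascade directly in the stacked configuration, and the ``hard part'' you defer is where the argument breaks down. In a vertical stacked configuration, $s_t$ and $s_b$ each contact \emph{both} $s_\ell$ and $s_r$, so the dependency graph of the compensating moves is not a simple $4$-cycle as in the pinwheel case. Concretely: sliding $s_t$ down changes $g_{tr}$ as well as the central gap; scaling $s_b$ up to restore the central gap also perturbs $g_{b\ell}$ (since $s_b$'s top-left corner rides along $s_\ell$); and ``slide $s_r$ to restore the two right-hand corner gaps'' generally cannot fix both $g_{tr}$ and $g_{br}$ with a single slide. The contraction-factor argument from Lemma~\ref{lem:pinwheel-resolution} relies on the cyclic contact structure and does not transfer to this branching one without further work that you have not supplied. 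The paper's reduction sidesteps all of this by converting the stacked configuration into a pinwheel first.
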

\begin{proof}
Assume w.l.o.g.\ that we are given a vertical stacked configuration in which \sr and \sd have a point contact, and we wish to increase the aspect ratio \gdr to $\gamma>$\gdr.

If there is not a point contact between \slf and \su, then the following transformation can be applied. Scale up \sd from its top-left corner to increase \gdr. To account for the resulting change in \gdl, scale up \slf and translate it downward while maintaining \gul, as described in Lemma~\ref{lem:slide-scale}. This transformation will either increase \gdr to $\gamma$, or it will result in a point contact between \slf and \su.

If there is a point contact between \slf and \su, then the squares are arranged in a pinwheel configuration, and by Lemma~\ref{lem:pinwheel-resolution} we can increase \gdr to $\gamma$ while maintaining all other aspect ratios.
\end{proof}

\begin{lemma}\label{lem:arrow-resolution}
An upward or downward (resp., rightward or leftward) arrow configuration, with a point contact between the directional square and one of its neighbors, can be transformed such that the aspect ratio of the outer gap between those squares increases (resp., decreases) to any amount $\gamma>0$ while all other aspect ratios remain the same.
\end{lemma}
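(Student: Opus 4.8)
The plan is to argue as in the proof of Lemma~\ref{lem:stacked-resolution}: I describe a continuous deformation of the given arrow configuration that strictly increases the target gap's aspect ratio while keeping the central gap and the remaining three outer gaps unchanged, run it until it either reaches the prescribed value or produces a new point contact, and in the latter case observe that the configuration has become a clockwise pinwheel, so that Lemma~\ref{lem:pinwheel-resolution} completes the argument.

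By a rotation and/or reflection it suffices to treat a downward arrow whose directional square is \sd, with a point contact between \sd and its right neighbour \sr, the goal being to raise \gdr to a prescribed value $\gamma$. (For downward and upward arrows the relevant aspect ratio can only be raised; the rightward and leftward cases, obtained by rotation, give the ``decrease'' statement.) Choose coordinates so that \slf, \sd and \sr share the horizontal line containing the bottom side of the central gap, so that the bottom side of \su is the top side of the central gap, and so that \su is the \emph{strictly} tallest square --- it must be, or else \gul or \gur would degenerate. Then \su and the central gap are pinned, the aspect ratio \gdr equals $\len(s_b)$ divided by the width of the bottom-right gap, and, as long as \su stays tallest, holding \gul and \gur fixed forces $\len(s_\ell)$ and $\len(s_r)$ to remain constant.

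The deformation has two stages. In \emph{Stage~1} I enlarge \sd toward the bottom-right: its top side stays on the central gap's bottom side, its right side slides rightward underneath \sr (turning the \sd--\sr contact proper), and its left side slides rightward along the bottom side of \slf at the unique rate keeping the aspect ratio \gdl fixed. Nothing else moves, so the central gap and \gul, \gur, \gdl are unchanged, whereas the width of the bottom-right gap shrinks and $\len(s_b)$ grows, so \gdr strictly increases and tends to $+\infty$ as the right side of \sd approaches the right side of \sr. I stop when \gdr reaches $\gamma$ (done) or when the left side of \sd reaches the right side of \slf, creating a point contact between \sd and \slf; the configuration is then a clockwise near-pinwheel whose reversed contact is the still-vertical \slf--\su contact. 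In \emph{Stage~2} I keep enlarging \sd downward, now with its left side locked to the right side of \slf, and at the same time translate \slf downward and enlarge it, choosing the two rates so that \gdl and \gul stay fixed; this makes $\len(s_\ell)$ and the position of \slf affine functions of $\len(s_b)$ that agree with Stage~1 at the handover. Since \su and \sr are untouched, the central gap and \gur also stay fixed, and \gdr again strictly increases and diverges. I stop when \gdr reaches $\gamma$ (done) or when the top side of \slf reaches the bottom side of \su: then the top-right corner of \slf meets the bottom-left corner of \su, all four contacts match the clockwise pinwheel pattern, and the configuration is a clockwise pinwheel, so Lemma~\ref{lem:pinwheel-resolution} pushes \gdr the rest of the way to $\gamma$.

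The routine but careful part is to solve the two small linear systems of Stages~1 and~2, to check that their solutions stay feasible --- squares interior-disjoint, all five gaps still honest rectangles, and in particular no gap merging with a neighbour --- exactly up to the announced point contact, and to confirm that \gdr sweeps through every value above its initial one over the two stages, so that any $\gamma$ is attained. The point to get right --- and the reason a single deformation of \sd does not suffice --- is that once the left side of \sd abuts \slf one must ``release'' \slf, and that Stage~2 terminates precisely on a genuine clockwise pinwheel (not on a near-pinwheel, whose resolution is not yet available, nor on some other configuration), so that Lemma~\ref{lem:pinwheel-resolution} is legitimately applicable.
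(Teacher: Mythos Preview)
Your proof is correct and follows essentially the same two-stage strategy as the paper's: first slide and enlarge \sd\ while keeping \gdl\ fixed until either \gdr$=\gamma$ or \sd--\slf\ becomes a point contact, then grow \sd\ together with a downward slide-and-scale of \slf\ (keeping \gdl\ and \gul\ fixed) until either \gdr$=\gamma$ or \slf--\su\ becomes a point contact, at which point the configuration is a clockwise pinwheel and Lemma~\ref{lem:pinwheel-resolution} finishes. The only cosmetic difference is that the paper phrases each stage as an application of Lemma~\ref{lem:slide-scale}, whereas you describe the deformations directly; one minor imprecision is your claim that \su\ is the ``strictly tallest'' square---what actually follows from non-degeneracy of \gul\ and \gur\ is that the \emph{top} of \su\ is strictly highest, not that $\len(s_t)$ is maximal---but this does not affect the argument.
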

\begin{proof}
Assume w.l.o.g.\ that we are given a downward arrow configuration in which \sr and \sd have a point contact, and we wish to increase the aspect ratio \gdr to $\gamma$.

If \sd and \slf do not have a point contact, translate \sd to the right while scaling it up in order to maintain \gdl (as described in~Lemma \ref{lem:slide-scale}) while increasing \gdr until \gdr = $\gamma$, or until there is a point contact between \sd and \slf.

If \sd and \slf have a point contact, then scale up \sd from its top-left corner to increase \gdr.  To account for the corresponding change in \gdl, translate \slf downward while scaling it up to maintain \gul (as described in Lemma~\ref{lem:slide-scale}) until \gdr = $\gamma$, or until there is a point contact between \slf and \su.

If \slf and \su have a point contact, then the squares are arranged in a pinwheel configuration, and by Lemma~\ref{lem:pinwheel-resolution} we can increase \gdr to $\gamma$ while maintaining all other aspect ratios.
\end{proof}

\begin{lemma}\label{lem:near-pinwheel-resolution}
A near-pinwheel configuration can be transformed such that the aspect ratio of the outer gap in the direction of the near-pinwheel (clockwise or counterclockwise) from the reversed contact increases to any amount $\gamma>0$ if its left side is the side of a square, or decreases to any amount $\gamma>0$ if its top side is the side of a square, while all other aspect ratios remain the same.
\end{lemma}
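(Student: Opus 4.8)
The plan is to reduce the statement to Lemma~\ref{lem:pinwheel-resolution}. First I would normalize the configuration using the symmetries of the (near-)pinwheel. Rotating by a multiple of $90^\circ$ cyclically permutes the four squares and the four corner gaps while preserving handedness, so I may assume the reversed contact occupies the position of the contact between \su and \sr in a clockwise pinwheel; a reflection interchanges clockwise with counterclockwise, and a suitable reflection additionally interchanges \emph{left} with \emph{top} and \emph{increase} with \emph{decrease} while sending near-pinwheels to near-pinwheels, so it suffices to treat the clause in which the named outer gap has its left side formed by a square and one wants to increase its aspect ratio. In this normalization the named gap is the corner gap met by continuing clockwise past the reversed contact, and the reversed contact occurs in one of two orientations; one checks directly from the definitions that precisely the orientation under consideration is the one making the left side, rather than the top side, of that gap a side of a square.

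Next I would run the same kind of slide-and-scale chain used in the proof of Lemma~\ref{lem:pinwheel-resolution}: starting from the square that forms the left side of the named gap, scale it up from the appropriate corner by a parameter $d_1>0$, and then travel once around the 4-cycle, at each step scaling up---and, where needed, translating in the manner of Lemma~\ref{lem:slide-scale}---the next square by the unique amount that leaves its incident corner gap unchanged. As in Lemma~\ref{lem:pinwheel-resolution}, each successive increment is a fixed positive fraction of the preceding one; the only new feature is that the single link of the chain crossing the reversed contact contributes a different fraction, but this fraction is again strictly less than $1$, since the squares meeting at the reversed contact do not reach the far sides of $R$. Hence the growth transmitted all the way around is strictly smaller than $d_1$, so the boundary cutting off the named gap advances by $d_1$ while the opposing square grows by strictly less and the gap's height is untouched: the named gap's aspect ratio is a strictly increasing, continuous function of $d_1$. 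The reversed contact is a point contact, and as $d_1$ increases the offending corner slides along the side on which it rests; at the first value of $d_1$ at which that corner reaches the end of that side, the configuration has become a genuine clockwise pinwheel (possibly with a degenerate contact there), and Lemma~\ref{lem:pinwheel-resolution} then carries the named gap's aspect ratio to any value exceeding its current one. Because the aspect ratio has varied continuously and strictly increasingly throughout, every target $\gamma$ larger than the initial aspect ratio is attained; if $\gamma$ is already at most that value, there is nothing to prove.

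The step I expect to be the crux is the analysis of the single link of the chain that traverses the reversed contact: I must verify that pushing growth across a perpendicular corner-on-side contact still yields a strictly contracting increment, that no square is forced off $R$ or into a second improper contact before the reversed contact heals, and that the three gaps meant to remain fixed---whose defining sides involve the squares incident to the reversed contact---really are held fixed. The remaining work, namely confirming the case correspondence asserted in the first paragraph and deducing the symmetric second clause from the reflection described there, is routine.
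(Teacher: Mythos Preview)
Your high-level plan---normalize by symmetry, then drive the configuration toward a genuine pinwheel and invoke Lemma~\ref{lem:pinwheel-resolution}---matches the paper's. The divergence is in the mechanism, and your mechanism runs into a real obstacle at exactly the place you flag as the crux.

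First, a factual slip: the reversed contact of a near-pinwheel is \emph{not} a point contact. By definition it is an ordinary contact whose orientation (horizontal versus vertical) is flipped relative to the pinwheel; in the one application of this lemma (Case~4.1 in the proof of Lemma~\ref{lem:2}) the near-pinwheel arises precisely when the slide of Lemma~\ref{lem:slide-scale} halts \emph{before} reaching a point contact, so the reversed contact there is strictly proper. Your picture of a corner ``sliding along the side on which it rests'' toward the end of that side is correct, but the starting position is a positive-length overlap.

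Second, and more seriously, the full four-link chain does not go through at the reversed link. In the normalization of Fig.~\ref{fig:1d} (reversed contact between \su and \sr, so \sr's top-left corner sits on \su's bottom side), after scaling \su from its bottom-right corner by $d_3$ to restore \gul, the bottom of \su is fixed and \sr has not moved; but now you must adjust \sr to restore \gur. Scaling \sr from its bottom-left corner raises its top above the fixed bottom of \su and creates an overlap with \su; scaling \sr from its top-left corner pushes its bottom below the fixed top of \sd and creates an overlap with \sd. Either way the fourth link breaks the tiling. Moreover, the contraction inequality you assert for that link (``the squares meeting at the reversed contact do not reach the far sides of $R$'') does not obviously bound the relevant ratio, which here is $\wdth(g_{tr})/\len(s_t)$ rather than a ratio of two square side-lengths.

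The paper's proof sidesteps this entirely by \emph{not} crossing the reversed contact. The key observation is that in the near-pinwheel of Fig.~\ref{fig:1d}, \su has \emph{both} neighbouring corners---the top-right of \slf and the top-left of \sr---resting on its bottom side, so \su itself sits in the configuration of Lemma~\ref{lem:slide-scale}. Hence, after scaling \sd and then \slf exactly as in your first two links, one applies slide-scale to \su: scale it up and translate it left so as to restore \gul while \gur is held fixed automatically. The square \sr is never touched, so \gdr changes only through \sd. The slide of \su terminates when \su and \sr reach a point contact, and at that moment the configuration is a genuine pinwheel, so Lemma~\ref{lem:pinwheel-resolution} finishes the argument. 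The chain thus has three links, not four, and the reversed contact is never traversed.
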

\begin{proof}
Assume w.l.o.g.\ that we are given a clockwise near-pinwheel with a reversed top-right contact (as in Figure~\ref{fig:1d}), and we wish to increase the aspect ratio \gdr to $\gamma$.

Perform the following transformation until \su and \sr have a point-contact or until \gdr has been increased to $\gamma$.  Scale up \sd from its top-left corner by some amount. To account for the corresponding change in \gdl, scale up \slf from its top-right corner.  To account for the corresponding change in \gul, scale up \su and translate it to the left while maintaining \gur as described in Lemma~\ref{lem:slide-scale}.

If \gdr does not reach its target value once \su and \sr have a point contact, then the configuration is a pinwheel, and by Lemma~\ref{lem:pinwheel-resolution} we can increase \gdr to $\gamma$.
\end{proof}

We now have everything needed to prove Lemma~\ref{lem:2}.

\begin{proof}[Proof of Lemma~\ref{lem:2}] 
Let $\alpha_c$, \aul, \aur, \adr, and \adl be given.  Start with the initial configuration (cf.~Fig.~\ref{fig:initial-config}). If the target aspect ratios of all four outer gaps are $\alpha_c^{-1}$, then $R$ can be drawn now with aspect ratio $\alpha_c$. Otherwise, one or more of the outer gaps must have their aspect ratios changed, either by increasing or decreasing them.  

Rotate and reflect the initial configuration if necessary such that at least one gap needs to be made wider (i.e., $\alpha<g$), and the ratio $g/\alpha$ is maximal for the top-right gap. In order to change \gur to \aur, we can scale up \sr from its bottom-left corner until \gur $=$ \aur.  This scaling will not affect \gul or \gdl, but it will decrease \gdr. After the scaling, the bottom-right gap will either have the target aspect ratio already, need to be wider yet, or need to be narrower. From now on, we will not mention the case where a gap has reached its target aspect ratio already, because it just means that the next step can be skipped.

If the bottom-right gap needs to be wider yet, then by Lemma~\ref{lem:slide-scale} we can scale up \sr and translate it downward until \gdr = \adr without changing \gur. As $g/\alpha$ is assumed to be maximal for the top-right gap, if this transformation results in a point contact between \sr and \su, it also achieves \gdr = \adr (because otherwise, \gdr $>$ \gur = \aur).

If the bottom-right gap needs to be narrower, then we can scale up \sd from its top-left corner until \gdr = \adr.  This will increase \gdl.

Now, we can assume that \gur = \aur and \gdr = \adr.  We distinguish between four cases:
\begin{enumerate}
    \item \sd has not been scaled, and either \adl $\leq \alpha_c^{-1}$ or \aul $\leq \alpha_c^{-1}$.
    \item \sd has been scaled up from its top-left corner, \adl $\leq$ \aul, and \adl $\leq \alpha_c^{-1}$.
    \item \sd has been scaled up from its top-left corner, \aul $\leq$ \adl, and \aul $\leq \alpha_c^{-1}$.
    \item \aul $> \alpha_c^{-1}$ and \adl $> \alpha_c^{-1}$.
\end{enumerate}

\textbf{Case~1}: \sd has not been scaled, and either \adl $\leq \alpha_c^{-1}$ or \aul $\leq \alpha_c^{-1}$.  Reflect the configuration, if necessary, such that \adl $\leq$ \aul.  Scale up \slf from its top-right corner until \gdl = \adl (making the top-left gap wider).  Then, if \gul needs to decrease further, by Lemma~\ref{lem:slide-scale} we can scale up and translate \slf until \gul = \aul to achieve all target aspect ratios (once again, this transformation guarantees \gul = \aul even if it results in a point contact, because we assume \adl $\leq$ \aul). Otherwise, the top-left gap needs to be narrower. Since the configuration is a horizontal stacked configuration, and by Lemma~\ref{lem:stacked-resolution} we can apply a series of transformations to achieve all target aspect ratios.

\textbf{Case~2}: \sd has been scaled up from its top-left corner, \adl $\leq$ \aul, and \adl $\leq \alpha_c^{-1}$.  Scale up \slf from its top-right corner until \gdl = \adl.  This transformation decreases \gul.  Then, if \gul needs to decrease further, by Lemma~\ref{lem:slide-scale} we can scale up and translate \slf until \gul = \aul to achieve all target aspect ratios (once again guaranteed because \adl $\leq$ \aul).  Otherwise the top-left gap needs to be narrower. Since the squares are arranged in a pinwheel configuration, Lemma~\ref{lem:pinwheel-resolution} completes the proof.

\textbf{Case~3}: \sd has been scaled up from its top-left corner, \aul $\leq$ \adl, and \aul $\leq \alpha_c^{-1}$.  Scale up \slf from its bottom-right corner until \gul = \aul.  This transformation decreases \gdl.  Then, if \gdl needs to decrease further, by Lemma~\ref{lem:slide-scale} we can scale up \slf and translate it downward, maintaining all other aspect ratios, until \gdl = \adl or \slf and \su have a point contact.  If \slf and \su have a point contact, then the squares are arranged in a pinwheel configuration, and Lemma~\ref{lem:pinwheel-resolution} completes the proof.
Otherwise, \gdl needs to increase. Since the squares form a downward arrow configuration in this case, with a point contact between \sd and \slf, Lemma~\ref{lem:arrow-resolution} completes the proof.

\textbf{Case~4}: \aul $> \alpha_c^{-1}$ and \adl $> \alpha_c^{-1}$.
We distinguish between two subcases. 

\textbf{Case~4.1}: 
If the top-right corner of \sd lies on the bottom side of \sr, then by Lemma~\ref{lem:slide-scale}, we can translate \sd to the left while scaling it up until \gdl = \adl or \sd and \sr have a point-contact, while maintaining all other aspect ratios. 
If \gdl = \adl, then the configuration is a near-pinwheel and Lemma~\ref{lem:near-pinwheel-resolution} completes the proof.  Otherwise, if \sd and \sr have a point-contact, then the conditions of Case~4.2 below are satisfied and we proceed as follows.

\textbf{Case~4.2}: 
If the top-right corner of \sd lies on the left side of \sr, then scale up \su from its bottom-right corner until \gul = \aul and scale up \sd from its top-right corner until \gdl = \adl.  Now, \gur and \gdr (which were previously at their target values) both need to decrease. Reflect the configuration, if necessary, so that the width of the bottom-right gap needs to be increased by a larger amount than the top-right gap.  Scale up \sr from its bottom-left corner until \gur = \aur.  Then, because the width of the bottom-right gap needed to be increased by a larger amount of the two, it still needs to be wider.  The configuration is a rightward arrow, so by Lemma~\ref{lem:arrow-resolution}, we can decrease \gdr arbitrarily while maintaining the other aspect ratios.
\end{proof}

The following lemma, Lemma~\ref{lem:2+}, shows that all improper contacts can be replaced by proper contacts at the expense of allowing the five aspect ratios to vary within a given threshold. Using exact values of the aspect ratios, Lemma~\ref{lem:2} can only guarantee single-point contacts.  However, it is easy to extend Lemma~\ref{lem:2} to Lemma~\ref{lem:2+} by changing any improper contacts among adjacent squares in the 4-cycle into proper contacts. 

\begin{lemma}\label{lem:2+}
For every $\alpha_1,\ldots, \alpha_5>0$ and $\varepsilon>0$, there exists a $\lambda>0$ and a $\delta>0$ such that every axis-aligned rectangle $R$ of aspect ratio $\lambda'$, $|\lambda-\lambda'|<\delta$, can be subdivided into four squares and five gaps of 
aspect ratios $\alpha_i'$, with $|\alpha_i'-\alpha_i|<\varepsilon$, for $i=1,\ldots, 5$ such that 
\begin{itemize}
    \item the four squares are each in contact with a side of $R$, and their contact graph is a 4-cycle, and all contacts are proper;
    \item the first four gaps are each incident to the top-left, bottom-left, bottom-right, and top-right corner of $R$, respectively, and the fifth gap lies in the interior of $R$.
\end{itemize}
\end{lemma}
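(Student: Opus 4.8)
The plan is to start from the configuration produced by Lemma~\ref{lem:2}, perturb it so that every contact along the outer 4-cycle becomes proper, and then exploit the remaining freedom so that the aspect ratio of $R$ can range over a whole interval rather than a single value. Given $\alpha_1,\dots,\alpha_5>0$ and $\varepsilon>0$, first apply Lemma~\ref{lem:2} to the targets $\alpha_1,\dots,\alpha_5$, obtaining a rectangle $R_0$ subdivided into four squares and five gaps whose aspect ratios are \emph{exactly} $\alpha_1,\dots,\alpha_5$, with the four squares forming a 4-cycle along which some contacts may be single points; call this the base configuration and let $\lambda_0$ be its outer aspect ratio.

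Next I would remove the improper contacts. There are at most four of them, each between two adjacent squares of the 4-cycle. For each one, scale up one of the two incident squares (choosing which one, and from which corner, so as to avoid conflicts between the at most four contacts) by a small amount $t_i>0$, in such a way that the two squares come to share a boundary segment of positive length while the scaled square retains its contact with a side of $R$ and with its other neighbor on the cycle. These enlargements change the five gap aspect ratios and the outer aspect ratio, but all of these quantities are continuous in $(t_1,\dots,t_k)$, $k\le 4$, and converge to their base values as $(t_1,\dots,t_k)\to 0$. Hence one can fix small positive $t_1,\dots,t_k$ for which every cycle contact is proper, every gap is a nondegenerate rectangle, and every gap aspect ratio lies within $\varepsilon/2$ of its target; if the base configuration was already all-proper, skip this step. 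Let $\mathcal{C}_1$ be the resulting configuration and $\lambda$ its outer aspect ratio. (If one prefers to keep the four outer gaps exactly on target rather than merely within $\varepsilon/2$, the compensating moves of Lemmas~\ref{lem:slide-scale} and~\ref{lem:pinwheel-resolution} can be interleaved, but a bare continuity argument already suffices.)

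It remains to upgrade $\mathcal{C}_1$, which realizes the single outer aspect ratio $\lambda$, into a family realizing every $\lambda'$ in a neighborhood of $\lambda$. In $\mathcal{C}_1$ all properness conditions hold strictly: each of the four cycle contacts is a segment of positive length and each of the five gaps is a nondegenerate rectangle. Keeping the combinatorial type fixed (which square meets which side of $R$, and which side or corner of a square carries each cycle contact), the contact conditions are equalities between a handful of the coordinates of the four squares, so after normalizing $R=[0,1]\times[0,\lambda']$ the nearby valid configurations form a positive-dimensional family. On this family the outer aspect ratio $\lambda'$ is a nonconstant continuous function: scaling \sr up slightly from an appropriate corner and then restoring the cycle-contact equalities by small adjustments of the remaining squares widens $R$ (decreasing $\lambda'$), while the reverse move narrows it; equivalently, a suitable rotation or reflection of the slide-and-scale operation of Lemma~\ref{lem:slide-scale} moves one side of $R$ monotonically while fixing all but one gap. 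By the intermediate value theorem $\lambda'$ attains every value in some open interval $J$ with $\lambda$ in its interior, and on a sufficiently small subneighborhood all five gap aspect ratios stay within $\varepsilon$ of their targets and all contacts remain proper; choosing $\delta>0$ with $(\lambda-\delta,\lambda+\delta)\subseteq J$ finishes the proof. I expect this last step to be the main obstacle: Lemma~\ref{lem:2} rigidly fixes the outer aspect ratio, so the content of Lemma~\ref{lem:2+} beyond ``make the contacts proper'' is exactly the interval, and the point requiring care is that the outer aspect ratio is not locally constant on the configuration family for \emph{any} of the possible final contact patterns --- routine given the explicit moves of Lemmas~\ref{lem:slide-scale}--\ref{lem:near-pinwheel-resolution}, but not quite immediate.
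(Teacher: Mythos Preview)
Your proposal is correct and follows essentially the same approach as the paper: invoke Lemma~\ref{lem:2}, enlarge one square at each improper contact by a small amount so that all contacts become proper while the gap ratios move by less than $\varepsilon/2$, and then use a continuity argument (scaling one square within the all-proper configuration) to show the outer aspect ratio sweeps an interval around $\lambda$. The paper's proof is terser---it simply asserts that ``there exists a square that can be scaled up or down while maintaining proper contacts in the cycle'' and appeals to continuity---whereas you spell out more carefully why the outer aspect ratio is nonconstant on the configuration family; but the underlying idea is the same.
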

\begin{proof}
Let $\alpha_c$, \aul, \aur, \adr, \adl, and $\varepsilon>0$ be given. 
By Lemma~\ref{lem:2}, there is a rectangle $R$ with some aspect ratio 
$\lambda$ that can be subdivided into five gaps and four squares $s_b$, $s_t$, $s_\ell$, and $s_r$ whose contact graph is a cycle. 

\textbf{Case~1.}
Assume first that all four contacts in the cycle are proper. Then Lemma~\ref{lem:2+} holds with the same $\lambda$. In each case, there exists a square that can be scaled up or down while maintaining proper contacts in the cycle. When scaling a single square, the aspect ratio of the bounding box $R$ and some of the gaps change continuously. By continuity, there exists a $\delta>0$ such that if the aspect ratios of the bounding box is $\lambda'$ with $|\lambda'-\lambda|<\delta$, then all five gaps are at most $\varepsilon$ from their target values.

\textbf{Case~2.}
Next assume that one or more contacts in the cycle are improper, i.e., two squares intersect in a common corner. For each improper contact, we can successively scale up one of the two squares to establish a proper contact. 
We scale up each square by a sufficiently small amount such that the aspect ratios of the five gaps change by less than $\varepsilon/2$. Let $\lambda'$ be the aspect ratio of the new bounding box. We can show, similarly to Case~1, that Lemma~\ref{lem:2+} holds with $\lambda=\lambda'$ and some $\delta>0$ by continuity.
\end{proof}

\section{Proof of Theorem~\ref{thm:2}}\label{sec:scr}

Finally, we have all the tools needed to prove Theorem~\ref{thm:2}. We restate it for convenience:

\thmtwo*

\begin{proof}
We proceed by induction on $n$, the number of vertices of $G$.

\textbf{Basis step.} Assume that $G=C_4$ is a 4-cycle with a single bounded face $f_1$. It is clear that for any $\alpha_1>0$, $C_4$ has a proper square contact representation as a pinwheel configuration in which the gap corresponding to $f_1$ has aspect ratio $\alpha_1$.

\textbf{Induction step.} Let $G\in \mathcal{G}$ be a graph with $n\geq 5$ vertices, and assume that the claim holds for all graphs in $\mathcal{G}$ with fewer than $n$ vertices. Then $G$ was constructed from a graph $G_0\in \mathcal{G}$ with operation (a) or (b) that inserts one or four vertices into a 4-face $f_0=(v_1,\ldots , v_4)$. 
We may assume w.l.o.g. that $v_1$ and $v_3$ correspond to squares that lie on the vertical sides of the gap corresponding to $f_0$ in any square contact representation.
We distinguish between two cases.

\textbf{Case (a).} Assume that $G$ was obtained from $G_0$ by inserting a vertex $u$ into $f_0$ and connecting it to $v_1$ and $v_3$. This operation subdivides $f_0$ into $f_1$ and $f_2$; and all other faces are present in both $G$ and $G_0$. 
Let $\alpha_0=\alpha_1+\alpha_2+1$. 
By Lemma~\ref{lem:1+}, there exists a $\delta>0$ such that any rectangle of aspect ratio $\alpha_0'$ with $|\alpha_0'-\alpha_0|<\delta$ can be subdivided by two horizontal lines into rectangles of aspect ratios $\alpha_1'$, 1, and $\alpha_2'$ such that $|\alpha_1'-\alpha_1|<\varepsilon$ and $|\alpha_2'-\alpha_2|<\varepsilon$.
The induction hypothesis with $\varepsilon_0=\min\{\varepsilon,\delta\}$ implies that $G_0$ admits a proper square contact representation such that the gap corresponding to $f_0$ has aspect ratio $\alpha_0'$, where $|\alpha_0'-\alpha_0|<\varepsilon_0\leq\delta$, and all other gaps are at most $\varepsilon_0\leq\varepsilon$ off from their target aspect ratios. Lemma~\ref{lem:1+} now yields a subdivision of the gap corresponding to $f_0$ into a square in proper contact with the squares corresponding to $v_1$ and $v_3$, and two gaps of aspect ratios $\alpha_1'$ and $\alpha_2'$ with $|\alpha_1'-\alpha_1|<\varepsilon$ and $|\alpha_2'-\alpha_2|<\varepsilon$.

\textbf{Case (b).} Assume that $G$ was obtained from $G_0$ by inserting a 4-cycle $(u_1,u_2,u_3,u_4)$ into $f_0$ and adding the edges $u_i v_i$ for $i=1,\ldots , 4$. 
This operation subdivides $f_0$ into five faces $f_1,\ldots ,f_5$ of $G$; and all other faces are present in both $G$ and $G_0$.

By Lemma~\ref{lem:2+}, there exists an $\alpha_0>0$ and a $\delta>0$ such that any rectangle of aspect ratio $\alpha_0'$ with $|\alpha_0'-\alpha_0|<\delta$ can be subdivided into four squares and five gaps corresponding to $f_1,\ldots ,f_5$, of aspect ratios $\alpha_1',\ldots , \alpha_5'$, respectively, 
such that $|\alpha_i'-\alpha_i|<\varepsilon$ for $i=1,\ldots ,5$.
The induction hypothesis with $\varepsilon_0=\min\{\varepsilon,\delta\}$ implies
that $G_0$ admits a proper square contact representation such that the gap corresponding to $f_0$ has aspect ratio $\alpha_0'$, where $|\alpha_0'-\alpha_0|<\varepsilon_0\leq\delta$, and all other gaps are at most $\varepsilon_0\leq\varepsilon$ off from their target aspect ratios. Lemma~\ref{lem:2+} now yields a subdivision of the gap corresponding to $f_0$ into four squares, each in contact with a unique one of $v_1,\ldots, v_4$ and cyclically in contact with one another, and five gaps of aspect ratios $\alpha_1,\ldots,\alpha_5$ with $|\alpha_i'-\alpha_i|<\varepsilon$ for $i=1,\ldots ,5$.
\end{proof}

\section{Proof of Theorem~\ref{thm:arb}}
\label{app:arb}

\begin{restatable}{lemma}{spacing}\label{lem:spacing}
For every integer $n > 2$, $K_{2,n}\in \mathcal{G}$; and in any
\SCR of $K_{2,n}$, if the squares corresponding to the partite set of size two have
side lengths $\ell_1$ and $\ell_2$, then the distance between these squares is less than
$\frac{\min(\ell_1, \ell_2)}{n-2}$.

\end{restatable}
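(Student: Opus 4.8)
The statement has a combinatorial half ($K_{2,n}\in\mathcal{G}$) and a metric half (the distance bound). For the first half, I would start from $C_4=K_{2,2}\in\mathcal{G}$ and apply operation (a) repeatedly. If $K_{2,m}\in\mathcal{G}$ for some $2\le m<n$, drawn with partite sets $\{a,b\}$ and $\{c_1,\ldots,c_m\}$, then in any plane embedding every bounded $4$-face has the form $(a,c_i,b,c_j)$ with $a$ and $b$ opposite, so operation (a) — which inserts a new vertex joined to the two opposite corners of a face — can be used to add a new common neighbour of $a$ and $b$, yielding $K_{2,m+1}$. After $n-2$ applications we obtain $K_{2,n}$, so $K_{2,n}\in\mathcal{G}$.

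For the metric half, fix an \SCR of $K_{2,n}$ with $n\ge 3$; I will assume all contacts are proper (the general case follows by a short limiting argument). Write $S_a,S_b$ for the squares of $a,b$, with side lengths $\ell_1,\ell_2$, and $Q_i$ for the square of $c_i$. Relabel so that the outer face is $(a,c_1,b,c_n)$ and the bounded faces are $f_i=(a,c_i,b,c_{i+1})$ for $i=1,\ldots,n-1$, with rectangular gaps $g_i$; call $c_2,\ldots,c_{n-1}$ the \emph{middle} spokes. The key structural claim is that for every middle spoke $c_i$, the squares $S_a$ and $S_b$ touch $Q_i$ along \emph{opposite} sides of $Q_i$. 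Indeed, $\partial Q_i$ is partitioned into exactly four arcs — the contacts with $S_a$, $g_{i-1}$, $S_b$, $g_i$, in this cyclic order — and each arc is a convex subset of $\partial Q_i$ (an intersection of $Q_i$ with a convex set), hence lies inside a single side of $Q_i$; four such arcs covering $\partial Q_i$ must occupy the four sides bijectively, so $S_a$ and $S_b$, being opposite in the cyclic order, lie on opposite sides of $Q_i$. Thus each middle $Q_i$ is ``sandwiched'' between $S_a$ and $S_b$, either vertically or horizontally, and all of them in the same direction: a vertically sandwiched square forces the $x$-projections of $S_a,S_b$ to overlap, a horizontally sandwiched one forces the same for their $y$-projections, and since $S_a,S_b$ are disjoint at least one pair of projections is disjoint. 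After a rotation, assume all middle spokes are vertically sandwiched with $S_a$ above $S_b$; writing $d_y$ for the gap between the bottom side of $S_a$ and the top side of $S_b$, every middle $Q_i$ then has side length exactly $d_y$ and its top side lies on the bottom side of $S_a$, so its $x$-interval is contained in that of $S_a$.

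It remains to compare widths along the bottom side of $S_a$. The $x$-intervals of $Q_2,\ldots,Q_{n-1}$ are pairwise non-overlapping, each of length $d_y$, and all lie inside the $x$-interval of $S_a$, which has length $\ell_1$. The same four-arc argument applied to the rectangle $g_1$ shows that one side of $g_1$ is a full vertical side of $Q_2$, which forces $g_1$ to span exactly the vertical range of that side, its top side to lie on the bottom side of $S_a$, and hence its $x$-interval to be a nondegenerate sub-interval of that of $S_a$, disjoint from every $Q_i$. Therefore $(n-2)\,d_y+\wdth(g_1)\le\ell_1$, so $(n-2)\,d_y<\ell_1$. Finally, $S_a$ and $S_b$ both contain the $x$-interval of $Q_2$, so their $x$-projections overlap and the distance between them equals $d_y$; hence this distance is less than $\ell_1/(n-2)$, and by the symmetric argument also less than $\ell_2/(n-2)$, which proves the bound.

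The step I expect to be the crux is the structural claim that $S_a$ and $S_b$ meet each middle $Q_i$ on opposite sides. The ``convex arc lies in one side'' reasoning works precisely because both faces at a middle spoke are genuine (convex) rectangular gaps and the contacts are proper — indeed for the outer spokes $c_1,c_n$, where one incident face is the non-convex outer face, the conclusion genuinely fails, which is exactly why the bound carries $n-2$ rather than $n$. Making this claim airtight, and also handling non-proper contacts (where an arc could degenerate to a single point), is where the real care is needed.
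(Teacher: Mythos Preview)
Your proof is correct and follows essentially the same approach as the paper's, though you supply considerably more justification. The paper's argument is terse: it asserts (with a picture) that at most two of the $n$ spoke squares can fail to lie entirely inside the corridor of width $\min(\ell_1,\ell_2)$ between $S_a$ and $S_b$, whence the remaining $n-2$ squares are pairwise non-overlapping, all of side length equal to the corridor height, and packed into width $<\min(\ell_1,\ell_2)$. Your four-arc/opposite-sides argument is exactly a rigorous version of that ``it is clear'' step, and your use of the gap $g_1$ to secure the strict inequality is a nice touch the paper leaves implicit. Two minor remarks: the paper does not separately prove $K_{2,n}\in\mathcal{G}$, so your inductive construction via operation~(a) is a genuine addition; and your concern about non-proper contacts is moot here, since in this paper an \SCR has proper contacts by definition.
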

\begin{proof}
Let $s_1$ and $s_2$ be the squares of side lengths $\ell_1$ and $\ell_2$, respectively, in some \SCR of $K_{2,n}.$  W.l.o.g., we may assume that $\ell_1 \geq \ell_2$ and that $s_1$ is below $s_2$.  There exists a rectangle between $s_1$ and $s_2$ whose top side is the side of $s_2$ and whose height is the distance between $s_1$ and $s_2$.  It is clear that at most two of the $n$ squares corresponding to the other partite set can be anything but fully contained in this rectangle (see Figure~\ref{fig:corridor}).  Thus, the other $n-2$ squares must be inside this rectangle, and each must have the same side length because they contact the top and bottom of this rectangle.  Furthermore, the sum of their side lengths is less than $\ell_2$, because the squares don't overlap.  Thus, each of these squares has height less than $\frac{\ell_2}{n-2}$, and the distance between $s_1$ and $s_2$ is less than $\frac{\ell_2}{n-2}$.
\end{proof}

\begin{figure}[htbp]
	\centering
    \includegraphics{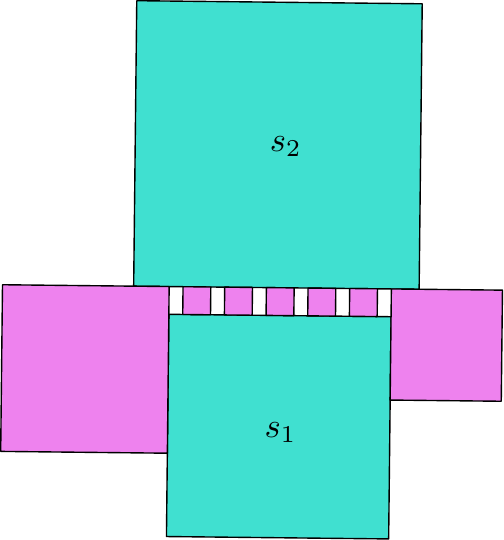}
	\caption{At most two squares are outside of the rectangle between $s_1$ and $s_2$.}\label{fig:corridor}
\end{figure}

\arbitraryaspectratios*
\begin{proof}
Let $r,\delta>0$ be given.
By applying a $90^\circ$ rotation, if necessary, we may assume that 
$r\in [1, \infty)$.

To construct $G$, we first construct its \SCR. We start with the 4-cycle, and successively insert squares into a remaining gap (defined below). After $i$ iterations, we obtain a graph $G_i$. We also maintain an interval $I_i$ such that $r\in I_i$ and in \emph{every} \SCR of $G_i$, the aspect ratio of the central  gap must be in $I_i$. Initially, we set $I_0=(0,\infty)$. 
We show that $I_{i+1}\subset I_i$ and $|I_i|<2^{-i}$ for all $i\in \mathbb{N}$. Consequently, $I_i\subset (r-\delta,r+\delta)$ when $2^{-i}<\delta$,
and we can return $G=G_i$. 

In each iteration, we repeatedly insert a square into a gap in the \SCR contacting either the top and bottom of the gap, or the left and right.  Clearly, the contact graph corresponding to the resulting \SCR will be a 2-degenerate plane bipartite graph. Whenever we insert a square into a gap, we will also assume that it contacts one additional side of the gap; however, instead of an actual contact, we can only guarantee that in any \SCR, they are sufficiently close to that side (cf.~Lemma~\ref{lem:spacing}): If the square contacts the left and right, then it must be very close to the bottom; if it contacts the top and bottom, then it must be very close to the left. Specifically, if $m$ is the total number of squares used in the rest of the construction, and $\ell$ is the side length of the largest square used in the construction, we can insert $\lceil\frac{2m\ell}{\delta} + 2\rceil$ squares in between each square and the side it is supposed to be close to.  This will ensure that each square is at most $\frac{\delta}{2m}$ apart from the side it is supposed to be close to, and thus that the aspect ratio differs from what the aspect ratio would be if these contacts actually existed by less than $\frac{\delta}{2}$. We can carry out the rest of the proof under the assumption that these squares in fact contact that side, and that the interval for the target aspect ratio is $(r-\frac{\delta}{2}, r+\frac{\delta}{2})$.

Because of these assumed additional contacts, there is always only one remaining gap in the course of the recursive construction. We will call this the \emph{remaining} gap.

Let the aspect ratio of the central gap (of the outer 4-cycle) be constrained to the interval $I_i$. When we insert a square into the remaining gap, either the lower or upper bound on this aspect ratio will become constrained to some $c\in I_i$.  Specifically, after inserting a square which contacts the left and right (and the bottom, as an additional contact) of the remaining gap, the lower bound increases to $c$ and the upper bound is unchanged.  This follows because the inserted square must be at least the width of the remaining gap, so the remaining gap's aspect ratio must be at least 1. However, it does not impose any constraint on the maximum height of the remaining gap, since the top of the square does not contact the top of the gap.  Similarly, after inserting a square which contacts the top and bottom (and left) of the remaining gap, the height of the gap is limited to the height of the square, so the remaining gap's aspect ratio must be at most 1, while the width of the gap is no further constrained.  As we will show later, the central gap's aspect ratio varies monotonically in the aspect ratio of the remaining gap.  Thus, we know that some $c$ must exist because inserting a square which contacts the top and bottom and inserting a square which contacts the left and right will each change a different one of the bounds of the aspect ratio of the remaining gap, and hence the central gap, to the same value.

So, one can always insert a sequence of squares contacting either the top and bottom or the left and right of the remaining gap, and it will either increase the lower bound or decrease the upper bound of the interval $I_i$, while containing the target aspect ratio $r$. In the remainder of the proof, we choose a specific sequence of insertions and show that both the upper and lower bounds  converge to $r$.

\subparagraph{Phases.}
Each iteration of the construction will consist of inserting squares into the remaining gap, $g$, in two \emph{phases}.  In each phase, we will either insert some number of squares which contact the left and right edges of the gap (a \emph{vertical} phase) or some number of squares with contact the top and bottom (a \emph{horizontal} phase).  The number of squares inserted is the \emph{size} of that phase.  Because the squares in each phase contact the same two sides of the gap, each phase will either increase the lower bound or decrease the upper bound of the interval $I_i$.  
W.l.o.g., let the next phase to insert be horizontal, setting some upper bound on the aspect ratio of $g$.  Then, by Lemma~\ref{lem:spacing}, we can insert a sufficiently large phase to reduce the distance between the last square in this horizontal phase and the side of $g$ to an arbitrarily small value, bringing the lower bound of the aspect ratio of $g$ arbitrarily close to the upper bound.  Because the central gap's aspect ratio varies monotonically in the aspect ratio of $g$, for any vertical (resp., horizontal) phase, there exists a $k$ for which inserting a phase of size $k+1$ would bring the lower (resp., upper) bound of $I_i$ above (resp., below) $r$.

We will use the following process to construct a \SCR whose central gap's aspect ratio is constrained to $(r-\delta, r+\delta)$, assuming $r \geq 1$.  Let the interval which is the bounds of the central gap's aspect ratio be $I_i=(a_i, b_i)$.  Starting with the four outer squares, while $|I_i| \geq \frac{\delta}{2}$:

\begin{enumerate}
    \item Insert a vertical phase whose size is the largest possible such that $a_i \leq r$.
    \item Insert a horizontal phase whose size is the largest possible such that $b_i \geq r$.
\end{enumerate}

Let $n$ be the total number of iterations.

\subparagraph{Convergence.} 
It is clear from the construction that $r\in I_{i+1}\subset I_i$ for all $i\in \mathbb{N}$. It remains to show that $|I_i|\leq 2^{-i}$. 
To prove the convergence, we will construct the same \SCR from the inside-out.  We start with an arrangement which is just the remaining gap, a rectangle, and add phases of squares alternatively contacting the left and bottom of this arrangement, as shown in Figure~\ref{fig:backwards}.  After adding phases in this way, the four outer squares can be added so that this construction ends with the same \SCR as we constructed with the above process.

\begin{figure}[htbp]
	\centering
    \includegraphics{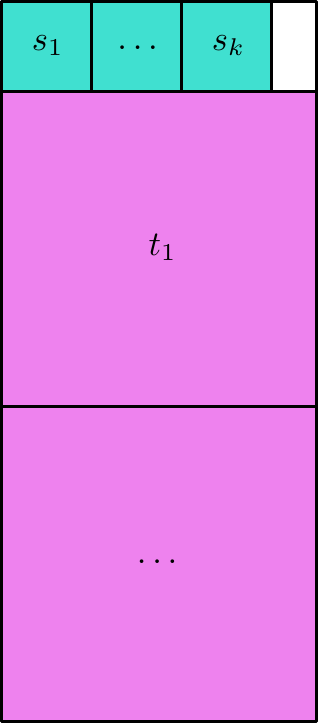}
	\caption{Starting with the remaining gap, we add a horizontal phase of squares (labeled $s_1, \ldots, s_k$ in this figure), then a vertical phase ($t_1, \ldots$), and will continue with alternating phases.}\label{fig:backwards}
\end{figure}

Let the width of the configuration after adding $i$ vertical phases be 1, and the height $h$ (and thus, the aspect ratio), be in some interval $J_i = (c, d)$. In particular, note that $J_0 = (0, \infty)$ and $J_n = I_n$.  We will then add a horizontal phase of size $k$, then a vertical phase of size $\ell$.  Note that each square in the horizontal phase has side length $h$, and each square in the vertical phase has side length $kh+1$.  Thus, the aspect ratio of the arrangement after inserting these phases is now

\[\frac{(kh+1)\ell + h}{kh+1}.\]  

This expression shows that $|J_i|<\infty$ for all $i\geq 1$.  By adding an extra iteration with $k_0, \ell_0 = 1$ we can also guarantee that $J_0 < 1$.
We can transform this expression as follows:

\[\frac{(kh+1)\ell + h}{kh+1} = \frac{(k\ell + 1)h + \ell}{kh+1} 
= \frac{k\ell + 1}{k} - \frac{\frac{1}{k}}{kh+1}.\]

This shows that the aspect ratio of the central gap varies monotonically in the aspect ratio of the remaining gap (as noted earlier).

As $\frac{k\ell + 1}{k}$ is a positive constant, it does not affect the length of the interval $J_{i+1}$.  Thus, we can say now that 

\begin{align*}
  |J_{i+1}| 
&< \frac{1}{k}\,  \left|\frac{1}{kd+1} - \frac{1}{kc+1}\right|\\
& < \frac{|c-d|}{(kc+1)(kd+1)}.
\end{align*}

We know that $d$ is at least 1, because $r \geq 1$, and $k$ is at least 1 as well.  Thus, the denominator is at least 2, and since $|J_i| = d - c$, 

\[|J_{i+1}| < \frac{|J_i|}{2}.\]

Combined with $|J_0|<1$, this implies $|I_n|=|J_n|<2^{-n}$,
and so $|I_n|<\delta$ if $2^{-n}<\delta$, or equivalently, $n>\log\delta^{-1}$.
\end{proof}

\subsection*{Acknowledgements}
The author would like to thank Stefan Felsner for bringing to his attention the connection to Schramm's square tiling result.  Additionally, he would like to thank Csaba Toth for all of his help and support in presenting these results.

\bibliography{Full_7-18-21_Bib}

\end{document}